\documentclass[DIV12,11pt]{scrartcl}

\usepackage{amssymb,amsmath}
\usepackage{tikz,enumerate,cite,graphicx}
\usepackage{amsthm}
\usepackage{calc}
\usetikzlibrary{positioning}

\newcommand{\argmax}{\mathop{\operatorname{argmax}}}

\setkomafont{title}{\normalfont \LARGE}
\setkomafont{section}{\normalfont \Large \bfseries \boldmath}
\setkomafont{subsection}{\normalfont \large \bfseries\boldmath}
\setkomafont{subsubsection}{\normalfont \normalsize \bfseries}
\setkomafont{descriptionlabel}{\itshape}
\setkomafont{paragraph}{\normalfont \bfseries \boldmath}

\DeclareMathOperator{\probab}{\mathbb{P}}

\newcommand{\eps}{\varepsilon}
\newcommand{\cev}[1]{\reflectbox{\ensuremath{\vec{\reflectbox{\ensuremath{#1}}}}}}

\newcommand{\isoevent}[2]{\ensuremath{\underline{\mathsf E}_{#1}^{#2}}}
\newcommand{\isoeventp}[2]{\ensuremath{\overline{\mathsf E}_{#1}^{#2}}}

\newtheorem{theorem}{Theorem}[section]
\newtheorem{lemma}[theorem]{Lemma}
\newtheorem{remark}[theorem]{Remark}
\newtheorem{corollary}[theorem]{Corollary}
\newtheorem{claim}[theorem]{Claim}

\title{Smoothed Analysis of Belief Propagation for Minimum-Cost Flow and Matchingg\thanks{This research was supported by ERC Starting Grant 306465 (BeyondWorstCase)
and NWO grant 613.001.023 (Smoothed Analysis of Belief Propagation).}}

\author{Tobias Brunsch$^1$ \and Kamiel Cornelissen$^2$ \and Bodo Manthey$^2$ \and Heiko R\"oglin$^1$}

\date{\small \parbox{63mm}{\centering University of Bonn \\
Department of Computer Science \\ \texttt{brunsch@cs.uni-bonn.de, heiko@roeglin.org}}
\qquad \parbox{63mm}{\centering University of Twente \\ Department of Applied Mathematics \\
\texttt{b.manthey@utwente.nl, k.cornelissen@utwente.nl}}}

\begin{document}

\maketitle

\begin{abstract}
Belief propagation (BP) is a message-passing heuristic for statistical inference
in graphical models such as Bayesian networks and Markov random fields.
BP is used to compute marginal distributions or maximum likelihood assignments
and has applications in many areas, including machine learning, image
processing, and computer vision. However, the theoretical understanding of the
performance of BP is unsatisfactory.

Recently, BP has been applied to combinatorial optimization problems.
It has been proved that BP can be used to compute maximum-weight matchings and minimum-cost
flows for instances with a unique optimum. The number of iterations needed for this 
is pseudo-polynomial and hence BP is not efficient in general.   

We study belief propagation in the framework of smoothed analysis and prove that
with high probability the number of iterations needed to compute maximum-weight matchings
and minimum-cost flows is bounded by a polynomial if the weights/costs of the edges are
randomly perturbed. To prove our upper bounds, we use an isolation lemma by Beier and V\"{o}cking (SIAM J. Comput. 2006) for matching and generalize an isolation lemma for min-cost flow by Gamarnik, Shah, and Wei (Operations Research, 2012). We also prove almost matching lower tail bounds for the number of iterations that BP needs to converge.
\end{abstract}

\section{Belief Propagation}
\label{sec:bp}
The belief propagation (BP) algorithm is a message-passing algorithm that is used for solving probabilistic inference problems on graphical models. It has been introduced by Pearl in 1988~\cite{Pearl:PlausibleInference:1988}. Typical graphical models to which BP is applied are Bayesian networks and Markov random fields. There are two variants of
BP. The sum-product variant is used to compute marginal probabilities. The max-product or min-sum variant is used to compute maximum a posteriori (MAP) probability estimates.

Recently, BP has experienced great popularity. It has been applied in a large number of fields, such as machine learning, image processing, computer vision, and statistics. For an introduction to BP and several applications, we refer to Yedidia et al.~\cite{UnderstandingBP}. The are basically two main reasons for the popularity of BP. First of all, it is generally applicable and easy to implement because of its simple and iterative message-passing nature. In addition, it performs well in practice in numerous applications~\cite{Yanover02approximateinference,BPforStereo}.  

If the graphical model is tree-structured, BP computes exact
marginals/MAP estimates. In case the graphical model contains cycles,
convergence and correctness of BP have been shown only for specific classes of
graphical models. To improve the general understanding of BP and to gain new
insights about it, the performance of BP as either a heuristic or an exact algorithm
for several combinatorial optimization problems has been studied. Amongst others it has been
applied to the maximum-weight matching (MWM) problem, the minimum spanning tree (MST) problem,
the minimum-cost flow (MCF) problem, and the maximum-weight independent set problem~\cite{BPIndSet}.

Bayati et al.~\cite{BipartiteMatching} have shown that max-product BP
correctly computes the MWM in bipartite graphs in pseudo-polynomial time if it is
unique. For MST, it is known that BP converges to the correct optimal solution, if it converges at all (which is not guaranteed)~\cite{CavityMST}.
Gamarnik et al.~\cite{BeliefMCF} have shown that the max-product BP algorithm computes the
MCF in pseudo-polynomial time if it is unique.

\subsection{Belief Propagation for Matching and Flow Problems}

\label{sec:intromatch}
In this section we discuss the previous results about the BP algorithm for computing maximum-weight matchings and
minimum-cost flows in more detail. 
Bayati et al.~\cite{BipartiteMatching} have shown that the max-product BP algorithm correctly
computes the maximum-weight matching in bipartite graphs if it is unique. Convergence of BP takes pseudo-polynomial time and depends
linearly on the weight of the heaviest edge and on $1/ \delta$, where $\delta$
is the difference in weight between the best and second-best matching. 
In Section~\ref{sec:bpdef} we describe the BP algorithm for MWM in detail.

Belief propagation has also been applied to finding maximum-weight perfect
matchings in arbitrary graphs and to finding maximum-weight perfect
$b$-matchings~\cite{BayatiArbitraryMatching,SanghaviArbitraryMatching}, where a
perfect $b$-matching is a set of edges such that every vertex is incident to
exactly $b$ edges in the set. For arbitrary graphs the BP algorithm for MWM does not necessarily converge~\cite{SanghaviArbitraryMatching}. However, Bayati et al.~\cite{BayatiArbitraryMatching} and
Sanghavi et al.~\cite{SanghaviArbitraryMatching} have shown that
the BP algorithm converges to the optimal matching if the relaxation of the corresponding
linear program has an optimal solution that is unique and integer. The number of iterations needed
until convergence depends again linearly on the reciprocal of the parameter~$\delta$.
Bayati et al.~\cite{BayatiArbitraryMatching} have also shown that the same
result holds for
the problem of finding maximum-weight $b$-matchings that do not need to be perfect.

It turns out that BP can, to some extent, solve the relaxation
of the corresponding linear program for matching, even if it has a non-integral optimal solution. Bayati et
al.~\cite{BayatiArbitraryMatching} have shown that it is possible to solve
the LP relaxation by considering so-called \emph{graph covers}, in which they
compute a bipartite matching. In case of an optimum that is unique and integer, the optimal
solution in the graph cover corresponds to the optimal solution. In case of a
unique but fractional optimal solution, the average of the estimates of two
consecutive iterations (both of which are perfect matchings in the graph cover)
yield a value of $0$, $1/2$, or $1$ for any edge, which then equals its value in the optimal solution of the
relaxed LP. 

Sanghavi et al.~\cite{SanghaviArbitraryMatching} have shown that BP remains
uninformative for some edges (and outputs ``?'' for those), but computes the
correct values for all edges that have a fixed integral value in all optimal
solutions.

Gamarnik et al.~\cite{BeliefMCF} have shown that BP can be used to find a minimum-cost
flow, provided that the instance has a unique optimal solution. The number of iterations
until convergence is pseudo-polynomial and depends again linearly on the reciprocal of the difference in
cost between the best and second-best integer flow. In addition, they have proved a discrete isolation lemma~\cite[Theorem 8.1]{BeliefMCF} that shows that the edge costs can be slightly randomly perturbed to ensure that, with probability at least $1/2$, the perturbed MCF instance has a unique optimal solution. Using this result, they have constructed an FPRAS for MCF using BP.

\subsection{Smoothed Analysis}
Smoothed analysis has been introduced by Spielman and Teng~\cite{SpielmanTeng} in order to explain the performance of the simplex method for linear programming. It is a hybrid of worst-case and average-case analysis and an alternative to both: An adversary specifies an instance, and this instance is then slightly randomly perturbed. The perturbation can, for instance, model noise from measurement. Since its invention in 2001, smoothed analysis has been applied in a variety of contexts.
We refer to two recent surveys~\cite{CACM,it} for a broader picture.

We apply smoothed analysis to belief propagation for min-cost flow and maxi\-mum-weight matching. To do this, we consider the following general probabilistic model.
\begin{itemize}
\item The adversary specifies the graph $G=(V,E)$ and, in case of min-cost flow,
the integer capacities of the edges and the integer budgets 
(both are not required to be polynomially bounded).
Additionally the adversary specifies a probability density function $f_e:[0,1] \to [0,\phi]$ for every edge $e$.
\item The costs (for min-cost flow) or weights (for matching) of the edges are then drawn independently according to their respective density function.
\end{itemize}
The parameter $\phi$ controls the adversary's power: If $\phi=1$, then we have the average case. The larger $\phi$, the more powerful the adversary. The role of $\phi$ is the same as the role of $1/\sigma$ in the classical model of smoothed analysis, where instances are perturbed by independent Gaussian noise with standard deviation $\sigma$.
In that model the maximum density $\phi$ is proportional to $1/\sigma$.

\subsection{Our Results}
We prove upper and lower tail bounds for the number of iterations that BP needs to solve maximum-weight matching problems and min-cost flow problems. Our bounds match up to a small polynomial factor.

We prove that the probability that BP needs more than $t$ iterations is bounded
by $O(n^2 m\phi/t)$ for the min-cost flow problem and $O(nm \phi/t)$ for various matching problems, where $n$
and $m$ are the number of nodes and edges of the input graph, respectively
(Sections~\ref{sec:iso} and~\ref{sec:upper}). The upper bound for matching problems holds for the
variants of BP for the maximum-weight matching problem in bipartite graphs~\cite{BipartiteMatching}
as well as for the maximum-weight (perfect) $b$-matching problem in general
graphs~\cite{SanghaviArbitraryMatching,BayatiArbitraryMatching}. For the latter
it is required that the polytope corresponding to the relaxation of the matching LP is integral. If this is not the case,
we can still solve the relaxation of the matching LP with a slightly modified
BP algorithm~\cite{BayatiArbitraryMatching} (using graph covers, see Remark~\ref{rem:generalmatching}). To prove the upper tail bound for BP for MCF we use a continuous isolation lemma that is similar to the discrete isolation lemma by Gamarnik et al.~\cite[Theorem 8.1]{BeliefMCF}. We need the continuous version since we not only want to have that a unique optimal solution, but we also need to quantify the gap between the best and the second-best solution.

These upper tail bounds are not strong enough to yield any bound on the expected number of iterations.
Indeed we show that this expectation is not finite by providing a lower tail bound of $\Omega(n \phi/t)$ for the 
probability that~$t$ iterations do not suffice
to find a maximum-weight matching in bipartite graphs.
This lower bound even holds in the average case, i.e., if $\phi = 1$ (Section~\ref{sec:uniformweights}),
and it carries over to the variants of BP for the min-cost flow problem and the minimum/maximum-weight (perfect) $b$-matching problem in general graphs mentioned
above~\cite{SanghaviArbitraryMatching,BayatiArbitraryMatching,BipartiteMatching,BeliefMCF}. The lower bound matches the upper bound up to a factor of $O(m)$ for
matching and up to a factor of $O(nm)$ for min-cost flow (Section~\ref{ssec:smoothedlower}). The smoothed lower bound even holds
for complete (i.e., non-adversarial) bipartite graphs.

Note that our lower bound on the number of iterations until convergence does not contradict the results by Salez and Shah~\cite{AverageMatchingBP}.
Roughly speaking, they have proved that BP for matching requires expected time $O(n^2)$. However, they allow that a small number of nodes are matched to incorrect nodes. It might even be the case that multiple nodes are matched to the same node. In our analysis we require convergence of the BP algorithm, i.e., each node should be matched to the unique node to which it is matched in the optimal matching. 

Finally, let us remark that, for the min-cost flow problem, we bound only the number of iterations that BP needs until
convergence. The messages might be super-polynomially long.
For all matching problems, however, the length of the messages is always bounded by a small polynomial.

\section{Definitions and Problem Statement}
In this section we define the maximum-weight matching problems that we consider and the min-cost flow problem. We also describe the BP algorithms that we apply. 

\subsection{Maximum-Weight Matching and Minimum-Cost Flow}
First we define the maximum-weight matching problem on bipartite graphs.
For this consider an undirected weighted bipartite graph $G=(U\cup V,E)$ with $U=\{u_1,\ldots,u_n\}$, $V=\{v_1,\ldots,v_n\}$, and $E\subseteq \{(u_i,v_j)=e_{ij}, 1 \leq i,j \leq n\}$. Each edge $e_{ij}$ has weight $w_{ij}\in \mathbb{R}^+$. A collection of edges $M \subseteq E$ is called a matching if each node of $G$ is incident to at most one edge in~$M$. We define the weight of a matching $M$ by
\[ \displaystyle w(M)=\sum_{e_{ij} \in M} w_{ij}. \]
The maximum-weight matching $M^\star$ of~$G$ is defined as
\[ 
M^\star=\argmax\{w(M)\mid M \textrm{ is a matching of }G\}. \]

A $b$-matching~$M\subseteq E$ in an arbitrary graph~$G=(V,E)$ is a set of edges such that every node from~$V$ is incident to at
most~$b$ edges from~$M$. A $b$-matching is called perfect if every node from~$V$ is incident to exactly~$b$ edges from~$M$.
Also for these problems we assume that each edge~$e\in E$ has a certain weight~$w_e$ and we define the weight of a $b$-matching~$M$
accordingly.

\subsection{Min-Cost Flow Problem}
In the min-cost flow problem (MCF) the goal is to find a cheapest flow that satisfies all capacity and budget constraints. 
We are given a graph $G=(V,E)$ with $V=\{v_1,\ldots,v_n\}$. In principle we allow multiple edges between a pair of nodes, but for ease of notation we consider simple 
directed graphs. Each node $v$ has a budget $b_v \in \mathbb{Z}$. Each directed edge $e=e_{ij}$ from $v_i$ to $v_j$ has capacity 
$u_e \in \mathbb{N}_0$ and 
cost $c_e \in \mathbb{R}^+$. For each node $v\in V$, we define $E_v$ as the set of edges incident to $v$. For each edge $e\in E_v$ we define $\Delta(v,e)=1$ if $e$ is an
out-going edge of $v$ and $\Delta(v,e)=-1$ if $e$ is an in-going edge of $v$. In the MCF one needs to assign a flow $f_e$ to each edge $e$ such that the total
cost $\sum_{e\in E} c_e f_e$ is minimized and the flow constraints
\[
0\leq f_e \leq u_e \quad \text{for all $e \in E,$}
\]
and budget constraints
\[
\sum_{e\in E_v} \Delta(v,e)f_e=b_v \quad \text{for all $v \in V$}
\]
are satisfied. We refer to Ahuja et al.~\cite{AhujaNetworkFlow} for more details about MCF.

Let us remark that we could have allowed also rational values for the budgets and capacities. 
As our results do not depend on these values, they are not affected by scaling all capacities and budgets by the smallest common denominator.

Note that finding a perfect minimum-weight matching in a bipartite graph $G=(U\cup V,E)$ is a special case of the
min-cost flow problem, see Ahuja et al.~\cite{AhujaNetworkFlow} for details.

\subsection{Belief Propagation}
\label{sec:bpdef}

For convenience, we describe the BP algorithm used by Bayati et al.~\cite{BipartiteMatching}. For the details of the other versions of BP for the (perfect) maximum-weight $b$-matching problem and the min-cost flow problem we refer to the original works~\cite{BayatiArbitraryMatching,SanghaviArbitraryMatching,BeliefMCF}. When necessary, we discuss the differences between the different versions of BP in Sections~\ref{sec:upper} and \ref{sec:lower}.

The BP algorithm used by Bayati et al.~\cite{BipartiteMatching} is an iterative
message-passing algorithm for computing maximum-weight matchings (MWM). Bayati et al.\ define their algorithm
for complete bipartite graphs $G=(U\cup
V,E)$ with $|U|=|V|=n$. In each iteration $t$, each node $u_i$ sends a
message vector
\[ \vec{M}^t_{ij}=[\vec{m}^t_{ij}(1),\vec{m}^t_{ij}(2),\ldots,\vec{m}^t_{ij}(n)] \]
to each of its neighbors $v_j$. The messages can be interpreted as how `likely' the sending node thinks it is that the receiving node should be matched to a particular node in the MWM. The greater the value of the message $\vec{m}^t_{ij}(r)$, the more likely it is according to node $u_i$ in iteration $t$ that node $v_j$ should be matched to node $u_r$. Similarly, each node $v_j$ sends a message vector $\cev{M}^t_{ji}$ to each of its neighbors $u_i$. The messages are initialized as
\begin{align*}
\vec{m}^0_{ij}(r) &= \begin{cases} w_{ij} & \text{if $r=i$,} \\ 0 & \text{otherwise and} \end{cases} \\
\cev{m}^0_{ji}(r) &= \begin{cases} w_{ij} & \text{if $r=j$,} \\ 0 & \text{otherwise.} \end{cases}
\end{align*}

The messages in iterations $t \geq 1$ are computed from the messages in the previous iteration as follows
\begin{align*}
\vec{m}^t_{ij}(r) &= \begin{cases} w_{ij} + \displaystyle\sum_{k \neq j} \cev{m}^{t-1}_{ki}(j) & \text{if $r=i$,} \\
\displaystyle\max_{q \neq j} \left[w_{iq}+\displaystyle\sum_{k \neq j} \cev{m}^{t-1}_{ki} (q)\right] & \text{otherwise and} \end{cases}  \\
\cev{m}^t_{ji}(r) &= \begin{cases} w_{ij} + \displaystyle\sum_{k \neq i} \vec{m}^{t-1}_{kj}(i) & \textrm{if $r=j$,} \\
\displaystyle\max_{q \neq i} \left[w_{qj}+\displaystyle\sum_{k \neq i} \vec{m}^{t-1}_{kj} (q)\right] & \text{otherwise.} \end{cases}
\end{align*}

The beliefs of nodes $u_i$ and $v_j$ in iteration $t$ are defined as 
\begin{align*}
b^t_{u_i}(r)&=w_{ir}+\sum_k \cev{m}^t_{ki}(r), \\
b^t_{v_j}(r)&=w_{rj}+\sum_k \vec{m}^t_{kj}(r).
\end{align*}
The beliefs can be interpreted as the `likelihood' that a node should be matched to a particular neighbor. The greater the value of $b^t_{u_i}(j)$, the more likely it is that node $u_i$ should be matched to node $v_j$. We denote the estimated MWM in iteration $t$ by $\tilde{M}^t$. The estimated matching $\tilde{M}^t$ matches each node $u_i$ to node $v_j$, where $j=\argmax_{ 1\leq r \leq n}\{b^t_{u_i}(r)\}$. Note that $\tilde{M}^t$ does not always define a matching, since multiple nodes may be matched to the same node. However, Bayati et al.~\cite{BipartiteMatching} have shown that if the MWM is unique, then for $t$ large enough, $\tilde{M}^t$ is a matching and equal to the MWM.   

\section{Isolation Lemma for Maximum-Weight Matchings and Min-Cost Flows}
\label{sec:iso}

Before we turn to proving the upper tail bounds for the number of iterations of the BP algorithm in Section~\ref{sec:upper}, we take a closer look at the
quantity~$\delta$, which we defined above as the difference in weight or cost between the best and second-best matching or
integer flow, respectively. The previous results discussed in Section~\ref{sec:intromatch} indicate that in order for the BP algorithm
to be efficient~$\delta$ must not be too small. While~$\delta$ can be arbitrarily small for weights or costs that are chosen by
an adversary, it is a well-known phenomenon that~$\delta$ is with high probability not too small when the weights or costs are drawn
randomly.

\subsection{Maximum-Weight Matchings}
\label{subsec:IsolationBinary}

Beier and V\"ocking~\cite{WinnersLosers} have considered a general scenario in which an arbitrary set~$S\subseteq\{0,1\}^m$ of feasible solutions
is given and to every $x=(x_1,\ldots,x_m)\in S$ a weight~$w\cdot x=w_1x_1+\ldots+w_mx_m$ is assigned by a linear objective
function. As in our model they assume that every coefficient~$w_i$ is drawn independently according to an adversarial density
function~$f_i:[0,1]\to[0,\phi]$ and they define~$\delta$ as the difference in weight between the best and the second-best feasible solution from~$S$,
i.e., $\delta=w\cdot x^\star - w\cdot \hat{x}$ where~$x^\star=\argmax_{x\in S}w\cdot x$ and~$\hat{x}=\argmax_{x\in S\setminus\{x^\star\}}w\cdot x$.
They prove a strong isolation lemma that, regardless of the adversarial choices of~$S$ and the density functions~$f_i$, the probability
of the event~$\delta\le\eps$ is bounded from above by~$2\eps\phi m$ for any~$\eps\ge 0$.

If we choose~$S$ as the set of incidence vectors of all matchings or (perfect) $b$-matchings in a given graph, Beier and V\"ocking's
results yield for every~$\eps\ge 0$ an upper bound on the probability that the difference in weight $\delta$ between the best and second-best matching 
or the best and second-best (perfect) $b$-matching is at most~$\eps$. Combined with the results in Section~\ref{sec:intromatch},
this can immediately be used to obtain an upper tail bound on the number of iterations of the BP algorithm for these problems.

\subsection{Min-Cost Flows}

The situation for the min-cost flow problem is significantly more difficult because the set~$S$ of feasible integer flows cannot naturally
be expressed with binary variables. If one introduce a variables for each edge corresponding to the flow on that edge, 
then~$S\subseteq\{0,1,2,\ldots,u_{\max}\}^m$ where~$u_{\max}=\max_{e\in E}u_e$. R\"oglin and V\"ocking~\cite{IntegerProgramming}
have extended the isolation lemma to the setting of integer, instead of binary, vectors. However, their result is not strong enough
for our purposes as it bounds the probability of the event~$\delta\le\eps$ by~$\eps\phi m (u_{\max}+1)^2$ from above for
any~$\eps\ge 0$. As this bound depends on~$u_{\max}$ it would only lead to a pseudo-polynomial upper tail bound on the number
of iterations of the BP algorithm when combined with the results of~\cite{BeliefMCF}. Our goal is, however, to obtain a polynomial
tail bound that does not depend on the capacities. In the remainder of this section, we prove that the isolation lemma from~\cite{IntegerProgramming}
can be significantly strengthened when structural properties of the min-cost flow problem are exploited.

In the following we consider the
residual network for a flow $f$. For each edge $e_{ij}$ in the original network
that has less flow than its capacity $u_{ij}$, we include an edge $e_{ij}$ with
capacity $u_{ij}-f_{ij}$ in the residual network. Similarly, for each edge
$e_{ij}$ that has flow greater than zero, we include the backwards edge $e_{ji}$
with capacity $f_{ij}$ in the residual network. We refer to Ahuja et
al.~\cite{AhujaNetworkFlow} for a more details about residual
networks.

As all capacities and budgets are integers, there is always a min-cost flow that is integral. An additional property of our probabilistic
model is that with probability one there do not exist two different integer flows with exactly the same costs. This follows directly from
the fact that all costs are continuous random variables. Hence, without loss of generality we restrict our presentation in the following
to the situation that the min-cost flow is unique.  

In fact, Gamarnik et al.~\cite{BeliefMCF} have not used~$\delta$, the difference in cost between the best and second-best integer
flow, to bound the number of iterations needed for BP to find the unique optimal solution of MCF, but they have used another quantity~$\Delta$. They have defined~$\Delta$ as the length of the cheapest cycle in the residual network of the min-cost flow~$f^\star$.
Note that~$\Delta$ is always non-negative. Otherwise, we could send one unit of flow along a cheapest cycle. This would result
in a feasible integral flow with lower cost. With the same argument we can argue that~$\Delta$ must be at least as large as~$\delta$
because sending one unit of flow along a cheapest cycle results in a feasible integral flow different from~$f^\star$ whose costs
exceed the costs of~$f^\star$ by exactly~$\Delta$. Hence any lower bound for~$\delta$ is also a lower bound for~$\Delta$ and so
it suffices for our purposes to bound the probability of the event~$\delta\le\eps$ from above. 



The isolation lemma we prove is based on ideas that Gamarnik et al.~\cite[Theorem 8.1]{BeliefMCF} have 
developed to prove that the
optimal solution of a min-cost flow problem is unique with high probability if
the costs are randomly drawn integers from a sufficiently large set. We provide a
continuous counterpart of this lemma, where we bound the probability that the
second-best integer flow is close in cost to the optimal integer flow.

\begin{lemma}
\label{lem:isoint}
The probability that the cost of the optimal and the second-best integer flow differs by at most $\eps\ge 0$ is bounded 
from above by $2 \eps \phi m$.
\end{lemma}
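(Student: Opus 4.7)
Given the identity $\delta=\Delta$ established in the paragraphs above --- with $\Delta$ the cost of the cheapest cycle in the residual network of the optimal flow $f^\star$ --- it suffices to bound $\Pr[\Delta\le\eps]$. The plan is to adapt the Beier--V\"ocking strategy recalled in Subsection~3.1 to the integer-flow setting via a union bound over the edges of $G$, replacing the $0/1$ toggle per coordinate with a cycle-based perturbation per edge.

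Concretely, for each edge $e$ I would define the loser gap
\[
  \Lambda_e \;=\; \min\bigl\{c(f)-c(f^\star)\;:\; f \text{ is a feasible integer flow with } f_e\ne f^\star_e\bigr\}.
\]
Any second-best flow $\hat f$ must disagree with $f^\star$ on at least one edge $e^*$, and then $\hat f$ itself witnesses $\Lambda_{e^*}\le\delta$. Hence $\{\delta\le\eps\}\subseteq\bigcup_e\{\Lambda_e\le\eps\}$, and a union bound reduces the proof to the per-edge claim $\Pr[\Lambda_e\le\eps]\le 2\phi\eps$. To establish this I would condition on the remaining costs $c_{-e}$ and parameterise feasible integer flows by the value $\phi=f_e\in\{0,1,\dots,u_e\}$: the cheapest feasible flow with $f_e=\phi$ has cost $\phi c_e+K_\phi$ where $K_\phi$ depends only on $c_{-e}$. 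Standard LP parametric sensitivity makes $K_\phi$ convex in $\phi$, so the optimal $\phi^\star(c_e)$ is monotone non-increasing in $c_e$, the runner-up at any non-threshold value of $c_e$ is $\phi^\star\pm 1$, and $\Lambda_e(c_e)$ is a piecewise-linear tent that vanishes at each threshold and rises with slope~$1$ on either side.

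The hardest step will be upgrading this tent picture to a \emph{capacity-independent} bound on the one-dimensional measure of $\{c_e:\Lambda_e(c_e)\le\eps\}$. A naive count of up to $u_e$ thresholds, each contributing $2\eps$, only recovers the capacity-dependent R\"oglin--V\"ocking bound $2\phi\eps\,u_e$ that the paper explicitly sets out to strengthen. The cycle structure of MCF --- precisely the ingredient Gamarnik et al.\ introduced for the discrete setting in their Theorem~8.1 --- is what rescues the argument: the two flows tying at any threshold differ by a single simple residual cycle through the edge $e$ being analysed, and a careful accounting using this cycle description shows that the bad intervals telescope so that the total measure is at most $2\eps$, independent of $u_e$. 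Multiplying by the density bound $f_e\le\phi$ yields $\Pr[\Lambda_e\le\eps\mid c_{-e}]\le 2\phi\eps$, and summing over the $m$ edges gives $\Pr[\delta\le\eps]\le 2\phi\eps m$.
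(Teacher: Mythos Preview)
Your framework---union bound over edges, condition on $c_{-e}$, analyse the one-dimensional bad set---matches the paper's, but the per-edge event you chose is too coarse, and the step you flag as ``the hardest'' is not actually carried out. You correctly observe that your tent picture gives, naively, a bad set of measure $2\eps$ per threshold and hence $2u_e\eps$ per edge, which is exactly the R\"oglin--V\"ocking bound. You then assert that ``the bad intervals telescope so that the total measure is at most $2\eps$,'' invoking cycle structure, but give no mechanism. In fact they do \emph{not} telescope: the intervals $[t_i-\eps,t_i+\eps]$ around distinct thresholds $t_i$ are disjoint whenever the thresholds are more than $2\eps$ apart, and nothing in the min-cost-flow structure forces the thresholds to cluster. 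So the per-edge bound $\Pr[\Lambda_e\le\eps]\le 2\phi\eps$ is not established, and the argument as written recovers only the capacity-dependent bound.

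The paper avoids this difficulty by defining sharper per-edge events. Instead of $\{\Lambda_e\le\eps\}$ (which triggers whenever $f^\star$ and a near-optimal $\hat f$ disagree on $e$), it defines $\isoevent{\eps}{e}$ and $\isoeventp{\eps}{e}$, which additionally require $f^\star_e=0$ (respectively $f^\star_e=u_e$). Under this boundary condition the one-dimensional analysis becomes trivial: if $f^\star_e=0$ at some $c_e=\alpha$, then $f^\star$ remains feasible and optimal for all $c_e\ge\alpha$, while any $\hat f$ with $\hat f_e\ge 1$ gains at least $c_e-\alpha$ in cost, so the bad set is a single interval of length at most~$\eps$. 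The work then shifts to showing these $2m$ boundary events cover $\{\delta\le\eps\}$. That is done by a ray-shooting argument: if no edge has $f^\star_e\in\{0,u_e\}$ and $\hat f_e\ne f^\star_e$, then $f^\star+\lambda(f^\star-\hat f)$ stays feasible for some $\lambda>0$ and has strictly smaller cost than $f^\star$, a contradiction. This geometric step is precisely the ``structural property of the min-cost flow problem'' the paper exploits, and it is what your proposal is missing.
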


\begin{proof}
Consider any fixed edge $\tilde e$, and let the costs of all other edges be fixed by an adversary. The cost $c_{\tilde e}$ of $\tilde e$ is drawn according to its probability distribution, whose density is bounded by $\phi$.

For~$e\in E$, let $\isoevent \eps e$ be the event that there exist two different integer flows $f^\star$ and $\hat f$ with the following properties:
\begin{enumerate}[(i)]
\item $f^\star $ is optimal.
\item $c \cdot f^\star$ and $c\cdot \hat f$ differ by at most $\eps$, i.e., $c \cdot \hat f \leq c \cdot f^\star + \eps$.
\item $f^\star_{e} = 0$ and $\hat f_{e} > 0$. \label{con:3}
\end{enumerate}
Let $\isoeventp \eps e$ be analogously defined, except for Condition~\eqref{con:3} being replaced by $f^\star_{e} = u_e$ and $\hat f_{e} < u_e$.

\begin{claim}
\label{cla:e}
Let $e \in E$ be arbitrary. Assume that all costs except for $c_{e}$ are fixed. Let $I \subseteq [0,1]$ be the set of real numbers such that $I = \{c_{e} \mid \isoevent \eps{e}\}$. Then $I$ is a subset of an interval of length at most $\eps$.
\end{claim}

\begin{proof}
If~$I\neq\emptyset$, let~$\alpha=\min(I)$ and let~$f^\star$ be an optimal integer
flow for~$c_{e}=\alpha$ with~$f^\star_{e}=0$.
Due to the choice of~$\alpha$ it is clear that~$I \subseteq [\alpha,\infty)$
We claim that $I \subseteq [\alpha,\alpha+\eps]$.  
If $c_{e}  = \alpha + \eta$ for some $\eta>0$, then
$f^\star$ stays optimal, and, for any feasible integer solution $f$ with
$f_{e} > 0$, we have
\begin{align*}
c \cdot f & = \sum_{\tilde e\neq e} c_{\tilde e} f_{\tilde e} + (\alpha + \eta)  f_{e} 
 \geq \sum_{\tilde e\neq e} c_{\tilde e} f_{\tilde e} + \alpha f_{e} + \eta & \text{as $f_{e} \geq 1$}\\
& \geq c \cdot f^\star + \eta & \text{as $f^\star_{e} = 0$ and $f^\star$ is optimal.}
\end{align*}
Thus, for $\eta > \eps$, the event $\isoevent \eps{e}$ does not occur.
\end{proof}

The proof of the following claim is omitted as it is completely analogous to the proof of the previous claim.
\begin{claim}
\label{cla:ee}
Let $e \in E$ be arbitrary. Assume that all costs except for $c_{e}$ are fixed. Let $I \subseteq [0,1]$ be the set of real numbers such that $I = \{c_{e} \mid \isoeventp \eps{e}\}$. Then $I$ is a subset of an interval of length at most $\eps$.
\end{claim}

The following claim shows that, provided no event $\isoevent \eps
e$ or $\isoeventp \eps e$ occurs, the second-best integer flow is more
expensive than the best integer flow by at least an amount of $\eps$.

\begin{claim}
Assume that for every edge~$e\in E$ neither $\isoevent \eps e$ nor $\isoeventp \eps e$ occurs. Let $f^\star$ be a min-cost flow and let $\hat f \neq f^\star$ be a min-cost integer flow that differs from $f^\star$, i.e., a second-best integer flow.
Then $c \cdot \hat f \geq c\cdot f^\star + \eps$.
\end{claim}

\begin{proof}
First we prove that under our assumption that the min-cost flow is unique some edge $e$ exists such that
$f^\star_{e} \in \{0,u_{e}\}$ and $\hat f_{e} \neq f^\star_{e}$.
Suppose that no such edge $e$ exists and let $d = f^\star - \hat f$. Then $d_{e} > 0$ only if
$f^\star_{e} < u_{e}$ and $d_e < 0$ only if $f^\star_e > 0$ because otherwise there is
an edge $e$ with $f^\star_e \in \{0, u_e\}$ and $\hat f_e \neq f^\star_e$.
From this, we can conclude that there exists a $\lambda > 0$ such that $f^\star
+ \lambda d$ is a feasible flow. Let $\lambda_0 = \max\{\lambda \mid f^\star +
\lambda d \text{ is feasible}\}$ and $\check f = f^\star + \lambda_0 d$. 
From the assumption that the min-cost flow is unique it follows that~$c\cdot d = c\cdot f^\star - c\cdot \hat f < 0$.
Hence, $c\cdot \check f < c\cdot f^\star$, contradicting the choice of~$f^\star$ as min-cost flow.

This argument shows that there always exists an edge~$e$ such that $f^\star_{e} \in \{0,u_{e}\}$ and $\hat f_{e} \neq f^\star_{e}$.
As none of the events $\isoevent \eps e$ and $\isoeventp \eps e$ occurs for this edge~$e$, it follows that
$c \cdot \hat f \geq c\cdot f^\star + \eps$.
\end{proof}

From Claims~\ref{cla:e} and~\ref{cla:ee}, we obtain $\probab(\isoevent \eps e) \leq \eps \phi$ and $\probab(\isoeventp \eps e) \leq \eps \phi$: We fix all edge costs except for $c_e$ and then $\isoevent \eps e$ can only occur if $c_e$ falls into an interval of length at most $\eps$. Since the density function of $c_e$ is bounded from above by $\phi$, this happens with a probability of at most $\eps \phi$. The same holds for any $\isoeventp \eps e$. Thus, the lemma follows by a union bound over all $2m$ events $\isoevent \eps e$ and $\isoeventp \eps e$.
\end{proof}

The isolation lemma (Lemma~\ref{lem:isoint}) together with the discussion about the relation between~$\delta$,
the difference in cost between the best and second-best integer flow,
and~$\Delta$, the length of the cheapest cycle in the residual network of the min-cost flow~$f^\star$, 
immediately imply the following upper bound for the probability that $\Delta$ is small.

\begin{corollary}
\label{cor:deltabound}
For any $\eps > 0$, we have
$\probab(\Delta \leq \eps) \leq 2\eps \phi m$.
\end{corollary}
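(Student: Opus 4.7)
The plan is to derive the corollary essentially for free from Lemma~\ref{lem:isoint} by showing the deterministic inequality $\delta\le\Delta$, so that the event $\{\Delta\le\eps\}$ is contained in $\{\delta\le\eps\}$.

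First I would recall from the discussion preceding Lemma~\ref{lem:isoint} that with probability one the min-cost flow~$f^\star$ is unique (the costs being continuous random variables), so we may condition on this full-measure event and treat~$\Delta$ as the well-defined length of the cheapest residual cycle of the unique~$f^\star$. If~$C$ is such a cycle and we push one unit of flow along it, standard residual-network arguments (see Ahuja et al.~\cite{AhujaNetworkFlow}) show that the resulting vector~$f'$ is a feasible integer flow distinct from~$f^\star$ with $c\cdot f'=c\cdot f^\star+\Delta$. Since the second-best integer flow~$\hat f$ is by definition a minimizer of $c\cdot f$ over all integer flows distinct from~$f^\star$, we have $c\cdot\hat f\le c\cdot f'$, and therefore $\delta=c\cdot\hat f-c\cdot f^\star\le\Delta$.

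This pointwise inequality gives the inclusion of events $\{\Delta\le\eps\}\subseteq\{\delta\le\eps\}$ (up to the null set on which~$f^\star$ is not unique). Applying Lemma~\ref{lem:isoint} then yields
\[
\probab(\Delta\le\eps)\le\probab(\delta\le\eps)\le 2\eps\phi m,
\]
which is exactly the claimed bound.

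I do not anticipate a real obstacle here: all the structural work has been done in Lemma~\ref{lem:isoint}, and the only thing one has to verify is that augmenting along a cheapest residual cycle produces a feasible \emph{integer} flow whose cost exceeds~$c\cdot f^\star$ by exactly~$\Delta$. This is immediate from the integrality of capacities and budgets together with the fact that a cycle preserves the budget constraints at every vertex. The one subtlety worth a brief sentence is the uniqueness of~$f^\star$, which is handled by the almost-sure argument above.
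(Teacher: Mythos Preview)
Your proposal is correct and follows exactly the approach the paper takes: the corollary is stated as an immediate consequence of Lemma~\ref{lem:isoint} together with the preceding observation that sending one unit along a cheapest residual cycle yields a feasible integer flow distinct from~$f^\star$, whence $\delta\le\Delta$ and $\{\Delta\le\eps\}\subseteq\{\delta\le\eps\}$. The only difference is cosmetic---the paper places the $\delta\le\Delta$ argument in the discussion before the lemma rather than in a separate proof of the corollary.
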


\section{Upper Tail Bounds}
\label{sec:upper}

\subsection{Maximum-Weight Matching}
\label{ssec:uppermatching}

We first consider the BP algorithm of Bayati et al.~\cite{BipartiteMatching}, which computes maximum-weight matchings
in complete bipartite graphs~$G$ in $O(nw^\star/\delta)$ iterations on all instances with a unique optimum. Here~$w^\star$ denotes the
weight of the heaviest edge and~$\delta$ denotes the difference in weight between the best and the second-best matching.
Even though it is assumed that~$G$ is a complete bipartite graph, this is not strictly necessary. If a non-complete graph
is given, missing edges can just be interpreted as edges of weight~$0$. 

With Beier and V\"ocking's isolation lemma (cf.\ Section~\ref{subsec:IsolationBinary}) we obtain the following tail bound for the number of iterations
needed until convergence when computing maximum-weight perfect matchings in bipartite graphs using
BP.

\begin{theorem}\label{thm:MatchingUpper}
Let $\tau$ be the number of iterations until Bayati et al.'s BP~\cite{BipartiteMatching} for maximum-weight perfect bipartite matching converges. Then $\probab(\tau \geq t) = O(nm\phi/t)$.
\end{theorem}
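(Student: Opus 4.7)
The plan is to combine the known pseudo-polynomial convergence bound of Bayati et al.\ with Beier and V\"ocking's isolation lemma discussed in Section~\ref{subsec:IsolationBinary}. Recall that the BP algorithm of~\cite{BipartiteMatching} is guaranteed to converge to the correct maximum-weight (perfect) bipartite matching within $O(nw^\star/\delta)$ iterations on every instance with a unique optimum, where $w^\star = \max_e w_e$ is the heaviest edge weight and $\delta$ is the difference in weight between the best and second-best matching. In our probabilistic model every weight is drawn from a density supported on $[0,1]$, hence $w^\star\le 1$ deterministically, and with probability one the optimum matching is unique. Consequently there is an absolute constant $C>0$ such that $\tau\le Cn/\delta$ almost surely.

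With this deterministic bound in hand, the tail probability reduces to a tail estimate on $\delta$:
\begin{equation*}
\probab(\tau \ge t) \;\le\; \probab\!\left(\frac{Cn}{\delta} \ge t\right) \;=\; \probab\!\left(\delta \le \frac{Cn}{t}\right).
\end{equation*}
Next I would invoke the Beier--V\"ocking isolation lemma, applied with $S\subseteq\{0,1\}^m$ taken as the set of incidence vectors of all (perfect) matchings of $G$. As noted in Section~\ref{subsec:IsolationBinary}, this yields $\probab(\delta\le\eps)\le 2\eps\phi m$ for every $\eps\ge 0$, regardless of the adversary's choice of graph and of the density functions $f_e$. Substituting $\eps = Cn/t$ gives
\begin{equation*}
\probab(\tau \ge t) \;\le\; 2 \cdot \frac{Cn}{t}\cdot \phi \cdot m \;=\; O\!\left(\frac{nm\phi}{t}\right),
\end{equation*}
which is exactly the claimed bound.

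There is essentially no hard step here; the argument is a direct chaining of two already-established results. The only minor point worth stating carefully is that $w^\star\le 1$ (so the factor $w^\star$ in Bayati et al.'s iteration bound becomes harmless), and that the isolation lemma applies to perfect matchings by instantiating $S$ with their incidence vectors rather than with all matchings. Non-complete graphs can be handled as described in the preceding paragraph of Section~\ref{ssec:uppermatching} by interpreting missing edges as edges of weight $0$, which does not affect the isolation bound since $\phi$ and $m$ refer to the (randomly weighted) edges of $G$.
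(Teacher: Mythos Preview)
Your proof is correct and follows essentially the same approach as the paper: combine Bayati et al.'s $O(nw^\star/\delta)$ convergence bound, use $w^\star\le 1$ from the $[0,1]$ support, and apply Beier and V\"ocking's isolation lemma with $\eps=\Theta(n/t)$. The additional remarks you make about almost-sure uniqueness and non-complete graphs are fine elaborations but not needed beyond what the paper states.
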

\begin{proof}
The number of iterations until BP converges is bounded from above by $O(nw^\star/\delta)$~\cite{BipartiteMatching}.
The weight of each edge is at most 1, so $w^\star \leq 1$. The upper bound exceeds $t$ only if $\delta \leq O(n/t)$.
By Beier and V\"ocking's isolation lemma, we have $\probab(\delta \leq O(n/t)) \leq O(nm\phi /t)$, which yields the bound claimed.
\end{proof}

This tail bound is not strong enough to yield any bound on the expected running-time of BP for bipartite matchings. But it is strong enough to show that BP has smoothed polynomial running-time with respect to the relaxed definition adapted from average-case complexity~\cite{WinnersLosers}, where it is required that the expectation of the running-time to some power $\alpha> 0$ is at most linear. However, a bound on the expected number of iterations is impossible, and the tail bound proved above is tight up to a factor of $O(m)$ (Section~\ref{sec:lower}).

As discussed in Section~\ref{sec:intromatch}, BP has also been applied to
finding maximum-weight (perfect) $b$-matchings in arbitrary
graphs~\cite{BayatiArbitraryMatching,SanghaviArbitraryMatching}. The result is
basically that BP converges to the optimal matching if the optimal solution of the relaxation of the
corresponding linear program is unique and integer. The number of
iterations needed until convergence depends again on ``how unique'' the optimal
solution is. For Bayati et al.'s variant~\cite{BayatiArbitraryMatching}, the
number of iterations until convergence depends on $1/\delta$, where $\delta$ is
again the difference in weight between the best and the second-best matching.
For Sanghavi et al.'s variant~\cite{SanghaviArbitraryMatching}, the number of
iterations until convergence depends on $1/c$, where $c$ is the smallest rate by
which the objective value will decrease if we move away from the optimum
solution.

However, the technical problem in transferring the upper bound for bipartite graphs to arbitrary graphs is that the adversary can achieve that, with high probability or even with a probability of $1$ (for larger $\phi$), the optimal solution of the LP relaxation is not integral. Already in the average-case, i.e., for $\phi =1$, where the adversary has no power at all, the optimal solution of the LP relaxation has some fractional variables with high probability.

Still, we can transfer the results for bipartite matching to both algorithms for arbitrary matching if we restrict the input graphs to be bipartite, since in this case the constraint matrix of the associated LP is totally unimodular.

\begin{theorem}
Let $\tau$ be the number of iterations until Bayati et al.'s~\cite{BayatiArbitraryMatching} or Sanghavi et al.'s~\cite{SanghaviArbitraryMatching} BP for general matching, restricted to bipartite graphs as input, converges. Then $\probab(\tau \geq t) = O(nm\phi/t)$.
\end{theorem}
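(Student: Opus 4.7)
The plan is to mirror the proof of Theorem~\ref{thm:MatchingUpper}, combining the convergence guarantees proved in~\cite{BayatiArbitraryMatching} and~\cite{SanghaviArbitraryMatching} with Beier and V\"ocking's isolation lemma (Section~\ref{subsec:IsolationBinary}), using total unimodularity as the bridge. Concretely, for bipartite input graphs the constraint matrix of the (perfect) $b$-matching LP is totally unimodular, so the polytope has only integral vertices. Therefore the LP relaxation always has a unique \emph{and} integer optimum (with probability one, by continuity of the edge-weight distributions), which is exactly the hypothesis both algorithms need to converge. Moreover, the set~$S$ of incidence vectors of (perfect) $b$-matchings in the bipartite graph is a subset of~$\{0,1\}^m$, so Beier and V\"ocking's isolation lemma applies verbatim and yields $\probab(\delta\le\eps)\le 2\eps\phi m$, where $\delta$ is the weight gap between the best and the second-best $b$-matching.

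First I would handle Bayati et al.'s variant. Their convergence bound gives $\tau=O(nw^\star/\delta)$, and since all weights lie in $[0,1]$ we have $w^\star\le 1$, so $\tau=O(n/\delta)$. Hence the event $\tau\ge t$ forces $\delta=O(n/t)$, and the isolation lemma then yields $\probab(\tau\ge t)=O(nm\phi/t)$, exactly as in Theorem~\ref{thm:MatchingUpper}.

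Second I would handle Sanghavi et al.'s variant, whose convergence bound depends on $1/c$, where $c$ is the minimum rate at which the LP objective decreases when moving away from the optimal vertex~$x^\star$ along an edge of the polytope. In the bipartite case the adjacent vertex~$\hat x$ reached along such an edge is itself a (different) matching, so the numerator of this rate is at least~$\delta$, while the denominator $\|x^\star-\hat x\|$ is bounded by a polynomial in $n$ and $m$ because both endpoints are $0/1$ vectors of length $m$. This gives a lower bound $c\ge\delta/\mathrm{poly}(n,m)$ and, after plugging into Sanghavi et al.'s bound, a convergence estimate of the form $\tau=O(\mathrm{poly}(n,m)/\delta)$. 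The event $\tau\ge t$ then again forces $\delta$ to be at most polynomially small in $n,m,1/t$, and Beier--V\"ocking closes the argument.

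The main obstacle is the second step: I have to verify that the precise polynomial factor relating $c$ to $\delta$ in the bipartite case, together with the exact form of Sanghavi et al.'s convergence bound, yields the claimed $O(nm\phi/t)$ rather than a bound with a worse polynomial prefactor. In particular, I would have to trace through the edges of the matching polytope and use that two adjacent $0/1$ vertices differ only along an alternating structure (so $\|x^\star-\hat x\|_1$ is at most linear in the length of that alternating structure) in order to keep the overhead absorbed inside the $O(nm\phi/t)$ bound.
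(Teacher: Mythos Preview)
Your approach is correct and matches the paper's proof. The obstacle you flag for Sanghavi et al.'s variant dissolves immediately: in the paper, $c$ is the minimum of $\frac{w\cdot(x^\star-\hat x)}{\|x^\star-\hat x\|_1}$ over \emph{all} vertices $\hat x\neq x^\star$ of the matching polytope (not just adjacent ones), and since $x^\star$ and $\hat x$ are incidence vectors of matchings one has $\|x^\star-\hat x\|_1\le n$ outright, yielding $c\ge\delta/n$ and $\tau=O(n/\delta)$ with no need to examine alternating structures.
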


\begin{proof}
For Bayati et al.'s BP algorithm, this follows in the same way as Theorem~\ref{thm:MatchingUpper} from their bound on the number of iterations until convergence, 
which is $O(n/\delta)$~\cite[Theorem 1]{BayatiArbitraryMatching}.

Sanghavi et al.\ prove that their variant of BP for general graphs converges after $O(1/c)$ iterations, provided that the LP relaxation has no fractional optimal solutions. Here, $c$ is defined as
\[
 c = \min_{\text{$\hat x \neq x^\star$ is a vertex of $P$}} \frac{w \cdot (x^\star - \hat x)}{\|x^\star - \hat x\|_1},
\]
where $x^\star$ is the (unique) optimal solution to the relaxation and $P$ is the matching polytope~\cite[Remark 2]{SanghaviArbitraryMatching}.

For any $\hat x \neq x^\star$, we have $\|x^\star - \hat x\|_1 \leq n$.
Furthermore, $w \cdot (x^\star - \hat x)$ is just the difference in weights
between $x^\star$ and $\hat x$. Since the input graph is bipartite, all vertices of $P$ are integral. Thus, $w \cdot (x^\star - \hat x) \geq \delta$,
where (again) $\delta$ is the difference in weight between the best and the
second-best matching.
Thus, $c \geq \delta/n$, which proves the theorem.
\end{proof}

\begin{remark}
\label{rem:generalmatching}
Bayati et al.~\cite{BayatiArbitraryMatching} and Sanghavi et al.~\cite{SanghaviArbitraryMatching} have also shown how to compute $b$-matchings with BP. If~$b$ is even, then the unique optimum to the LP relaxation is integral. Thus, we circumvent the problem that the optimal solution might be fractional. Hence, following the same reasoning as above, the probability that BP for $b$-matching for even~$b$ runs for more than~$t$ iterations until convergence is also bounded by $O(mn \phi/t)$.

Furthermore, Bayati et al.~\cite[Section 4]{BayatiArbitraryMatching} have shown how to compute the optimal solution of the relaxation of the matching LP with \emph{graph covers}. They obtain the same $O(n/\delta)$ bound for the number of iterations until convergence as for ordinary matching. However, since we are no longer talking about integer solutions, we cannot directly apply the
isolation lemma of Beier and V\"ocking~\cite{WinnersLosers}.
To see that $\delta$ is still unlikely to be small in the same way (with a slightly worse constant), we
can apply the isolation lemma of R\"oglin and V\"ocking~\cite{IntegerProgramming} since the matching polytope is half-integral.
Thus, if we scale the right-hand side with a factor of $2$, then we obtain a $0/1/2$ integer program.
Because of this, we obtain the same $O(mn \phi/t)$ tail bound for the probability that the number of iterations until convergence exceeds $t$.
\end{remark}
\subsection{Min-Cost Flow}
\label{ssec:upperflow}

The bound for the probability that $\Delta$ is small (Corollary~\ref{cor:deltabound}) plus the pseudo-polynomial
bound by Gamarnik et al.~\cite{BeliefMCF} yields a tail bound for the number of iterations that BP needs until convergence.

\begin{theorem}
\label{thm:mcfupper}
Let $\tau$ be the number of iterations until BP for min-cost flow~\cite{BeliefMCF} converges. Then $\probab(\tau \geq t) = O(n^2m\phi/t)$.
\end{theorem}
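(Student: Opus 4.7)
The plan is to copy the template of Theorem~\ref{thm:MatchingUpper}: combine a worst-case pseudo-polynomial iteration bound with a probabilistic tail bound on the relevant ``gap'' quantity. Here the gap is $\Delta$, the length of the cheapest residual cycle of the min-cost flow $f^\star$, and the probabilistic tail bound on it is exactly Corollary~\ref{cor:deltabound}. The deterministic ingredient is the pseudo-polynomial convergence bound of Gamarnik et al.~\cite{BeliefMCF}.

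First I would extract from~\cite{BeliefMCF} an explicit deterministic bound of the form $\tau = O(K/\Delta)$, where $K$ is polynomial in $n$ and linear in the maximum edge cost, and crucially carries no dependence on the capacities or budgets (the introduction already stresses that for MCF the BP \emph{messages} may be super-polynomially long; only the iteration count is what we need to control). Because in our smoothed model every cost lies in $[0,1]$, this $K$ collapses to a polynomial in $n$, which an inspection of~\cite{BeliefMCF} shows to be $O(n^2)$. Thus deterministically $\tau = O(n^2/\Delta)$.

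Second, I would rearrange and invoke Corollary~\ref{cor:deltabound}: the event $\tau \ge t$ forces $\Delta \le c\, n^2/t$ for some absolute constant $c$, so that
\[
\probab(\tau \ge t)\ \le\ \probab\!\left(\Delta \le c\, n^2/t\right)\ \le\ 2\phi m\cdot\frac{c\, n^2}{t}\ =\ O\!\left(\frac{n^2 m\phi}{t}\right),
\]
which is exactly the claimed bound. The main (and essentially only) obstacle is the bookkeeping step of verifying that the factor $K$ coming out of Gamarnik et al.'s analysis really is $O(n^2)$ once costs are normalized into $[0,1]$, and that no capacity- or budget-dependent quantity slips into it. All the probabilistic heavy lifting has already been done in Lemma~\ref{lem:isoint} and its corollary, so no further random-model argument is required.
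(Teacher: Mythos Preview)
Your proposal is correct and mirrors the paper's proof almost exactly: the paper invokes the deterministic bound $\tau \le cLn/\Delta$ from~\cite[Theorem~4.1]{BeliefMCF}, observes that $L\le n$ since each edge cost lies in $[0,1]$ and $L$ is the maximum cost of a simple residual path, and then applies Corollary~\ref{cor:deltabound} with $\eps = cn^2/t$. The only thing you leave slightly implicit is the identification $K = cLn$ and the reason $L\le n$, which is the ``bookkeeping step'' you flagged.
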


\begin{proof}
The number of iterations until BP for min-cost flow converges is bounded from above by $cLn/\Delta$ for some constant $c$, where $L$ is the maximum cost of a simple directed path in the residual network for the optimal flow~\cite[Theorem~4.1]{BeliefMCF}. The cost of each edge is at most 1, so $L\leq n$. The upper bound exceeds $t$ only if $\Delta \leq cn^2/t$. By Corollary~\ref{cor:deltabound}, we have $\probab(\Delta \leq cn^2/t) \leq 2c n^2m\phi /t$, which yields the bound claimed.
\end{proof}

\section{Lower Tail Bounds}
\label{sec:lower}

We show that the expected number of iterations necessary for convergence of BP
for maximum-weight matching (MWM) is unbounded. To do this, we prove a lower tail bound on the number of iterations that matches the upper tail bound from Section~\ref{sec:upper}. The lower bound holds even for a two by two complete bipartite graph with edge weights drawn independently and uniformly from the
interval $[0,1]$. In the following analysis, we consider the BP variant introduced
by Bayati et al~\cite{BipartiteMatching}. Our results can be extended to
other versions of BP for matching and min-cost flow~\cite{SanghaviArbitraryMatching,BayatiArbitraryMatching,BeliefMCF}
in a straightforward way. We discuss these extensions in Section~\ref{sec:OtherBPVars}.

We first discuss the average case, i.e., $\phi=1$, for which we obtain
a lower tail bound of $\Omega(n/t)$ for the probability that more than $t$ iterations
are needed for convergence  (Section~\ref{sec:uniformweights}). For this lower bound,
we use a simple adversarial graph. We leave it as an open problem whether the lower bound
also holds for the complete bipartite graph on $n$ vertices.
After that, we consider complete bipartite graphs with smoothed weights
and prove a lower bound of $\Omega(n\phi/t)$ for the probability that more than $t$ iterations
are needed for convergence (Section~\ref{ssec:smoothedlower}).
We conclude this section with a discussion on how to transfer our results to the other variants
of BP for matching and min-cost flow.

\subsection{Computation Tree}

For proving the lower bounds, we need the notion of a \emph{computation tree},
which we define analogously to Bayati et al.~\cite{BipartiteMatching}.

Let $G=(U \cup V, E)$ be a bipartite graph with $U = \{u_1, \ldots, u_n\}$
and $V = \{v_1,\ldots, v_n\}$.
We denote the level-$k$ \emph{computation tree} with the root labeled $x \in U \cup V$ by $T^k(x)$.
The tree
$T^k(x)$ is a weighted rooted tree of height $k+1$. The root node in $T^0(x)$ has label $x$, its degree is the
degree of $x$ in $G$, and its children are labeled
with the adjacent nodes of $x$ in $G$. $T^{k+1}(x)$ is obtained recursively from $T^k(x)$ by attaching children to every leaf node in
$T^k(x)$. Each child of a former leaf node labeled $y$ is assigned one vertex adjacent
to $y$ in $G$ as a label, but the label of the former leaf node's parent
is not used. (Thus, the number of children is the degree of $y$ minus $1$.)
Edges between nodes with label $u_i$ and label $v_j$ in the computation tree have a weight of $w_{ij}$.

We call a collection $\Lambda$ of edges in the computation tree $T^k(x)$ a \emph{$T$-matching} if no two edges of $\Lambda$
are adjacent in $T^k(x)$ and each non-leaf node of $T^k(x)$ is the endpoint of exactly one edge from $\Lambda$. Leaves
can be the endpoint of either one or zero edges from $\Lambda$. Let $t^k(u_i;r)$ be the weight of a maximum weight $T$-matching in
$T^k(u_i)$
that uses the edge $(u_i,v_r)$ at the root.

\subsection{Average-Case Analysis}
\label{sec:uniformweights}
Consider the undirected weighted complete bipartite graph $K_{2,2}=(U \cup V,E)$, where $U=\{u_1,u_2\}$, $V=\{v_1,v_2\}$, and $(u_i,v_j) \in E$ for $1 \leq i,j \leq 2$. Each edge $(u_i,v_j)=e_{ij}$ has weight~$w_{ij}$ drawn independently and uniformly from $[0,1]$. We define the event $E_{\eps}$ for $0<\eps \leq \frac{1}{8}$ as the event that $w_{11} \in \bigl[\frac{7}{8},1\bigr]$, $w_{12} \in \bigl(\frac{1}{2},\frac{5}{8}\bigr]$, $w_{21} \in \bigl(\frac{5}{8},\frac{3}{4}\bigr]$, and $w_{22} \in [w_{12}+w_{21}-w_{11}-\eps,w_{12}+w_{21}-w_{11})$. Consider the two possible matchings
$M_1=\{e_{11},e_{22}\}$ and $M_2=\{e_{12},e_{21}\}$. If event $E_{\eps}$ occurs, then the weight of $M_2$ is greater than the weight of $M_1$ and the weight difference is at most $\eps$. In addition, $w_{11}$ is greater than $w_{12}$ and the weight difference is at least $1/4$. See Figure~\ref{fig:BipMatching} for a graphical illustration.

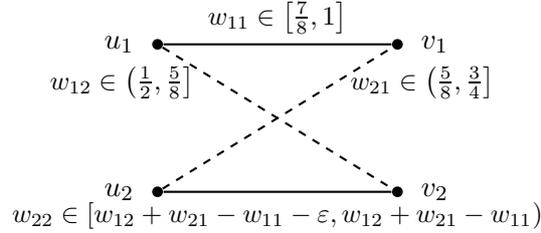
\begin{figure}[th]
\begin{center}
\begin{tikzpicture}[
vertex/.style={fill=black,circle,inner sep =0.5mm},
every label/.style={black,label distance=1mm},
myedge/.style={thick}]
\node [vertex] (u1) [label = {left:$u_1$}] {};
\node [vertex] (u2) [below=1.8cm of u1,label = {left:$u_2$}] {};
\node [vertex] (v1) [right=3cm of u1,label = {right:$v_1$}] {};
\node [vertex] (v2) [below=1.8cm of v1,label = {right:$v_2$}] {};
\draw
	 (u1) edge[myedge] node[above,font=\small] {$w_{11} \in \bigl[\frac{7}{8},1\bigr]$} (v1)
	 (u1) edge[myedge,dashed] node[left, near start,font=\small,xshift=-2mm] {$w_{12} \in \bigl(\frac{1}{2},\frac{5}{8}\bigr]$} (v2)
	 (u2) edge[myedge,dashed] node[right, near end,font=\small,xshift=0.6mm] {$w_{21} \in \bigl(\frac{5}{8},\frac{3}{4}\bigr]$} (v1)
	 (u2) edge[myedge] node[below,font=\small] {$w_{22} \in [w_{12}+w_{21}-w_{11}-\eps,w_{12}+w_{21}-w_{11})$} (v2);
\end{tikzpicture}
\caption{If event $E_{\eps}$ occurs, then the weight of the dashed matching $M_2=\{e_{12},e_{21}\}$ is greater than the weight of the solid matching $M_1=\{e_{11},e_{22}\}$ and the weight difference is at most $\eps$. In addition $w_{11}$ is greater than $w_{12}$ and the weight difference is at least $\frac{1}{4}$.}
\label{fig:BipMatching}
\end{center}
\end{figure}

\begin{lemma}
The probability of event $E_{\eps}$ is $\eps/{8^3}$. \label{ProbOfE}
\end{lemma}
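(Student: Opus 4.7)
The plan is to exploit the independence of the four edge weights and compute the probability as a product of the four individual event probabilities. Since $w_{11}, w_{12}, w_{21}, w_{22}$ are drawn independently and uniformly from $[0,1]$, I can factor
\[
\probab(E_\eps) = \probab\bigl(w_{11}\in\bigl[\tfrac{7}{8},1\bigr]\bigr)\cdot \probab\bigl(w_{12}\in\bigl(\tfrac{1}{2},\tfrac{5}{8}\bigr]\bigr)\cdot \probab\bigl(w_{21}\in\bigl(\tfrac{5}{8},\tfrac{3}{4}\bigr]\bigr)\cdot \probab\bigl(w_{22}\in J \bigm| w_{11},w_{12},w_{21}\bigr),
\]
where $J=[w_{12}+w_{21}-w_{11}-\eps,\, w_{12}+w_{21}-w_{11})$ is the (random) interval specified in the event.

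Next I would observe that the first three probabilities are each exactly $\tfrac{1}{8}$, as each interval has length $\tfrac{1}{8}$ and the marginal densities are the uniform density $1$ on $[0,1]$. The only subtlety is the last factor: $J$ is an interval of length exactly $\eps$, but I must verify that $J\subseteq[0,1]$ so that the conditional probability equals $\eps$ (rather than being truncated). Conditioning on the first three coordinates lying in their respective intervals, $w_{12}+w_{21}\in\bigl(\tfrac{1}{2}+\tfrac{5}{8},\,\tfrac{5}{8}+\tfrac{3}{4}\bigr]=\bigl(\tfrac{9}{8},\tfrac{11}{8}\bigr]$ and $w_{11}\in\bigl[\tfrac{7}{8},1\bigr]$, so $w_{12}+w_{21}-w_{11}\in\bigl(\tfrac{1}{8},\tfrac{1}{2}\bigr]$. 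Subtracting $\eps\le\tfrac{1}{8}$, the entire interval $J$ lies inside $\bigl(0,\tfrac{1}{2}\bigr]\subseteq[0,1]$, so the conditional probability is indeed $\eps$.

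Multiplying the four factors yields $\probab(E_\eps)=\tfrac{1}{8}\cdot\tfrac{1}{8}\cdot\tfrac{1}{8}\cdot\eps = \eps/8^3$. The only mildly non-routine step is the containment check for $J$, which is easy once the chosen numerical constants are inspected; the rest is just bookkeeping of uniform marginals and independence.
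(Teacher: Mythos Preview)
Your proof is correct and follows essentially the same approach as the paper: compute the three fixed interval probabilities of $1/8$ each, verify that the random interval $J$ for $w_{22}$ lies entirely inside $[0,1]$ (the paper shows $J\subseteq(0,\tfrac{1}{2}]$ via the same endpoint arithmetic you carry out), and multiply. Your explicit use of conditioning makes the dependence of $J$ on $w_{11},w_{12},w_{21}$ slightly more transparent than the paper's wording, but the argument is the same.
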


\begin{proof}
The intervals in which $w_{11}$, $w_{12}$, and $w_{21}$ have to assume values in order for
event $E_{\eps}$ to occur all have a length of $1/8$. The interval in which $w_{22}$ has to take a value in order for event $E_{\eps}$ to occur, has a length of $\eps$. It is contained completely in the interval $\bigl(0,\frac{1}{2}\bigr]$, since
\[ w_{12}+w_{21}-w_{11}-\eps > \frac{1}{2}+\frac{5}{8}-1-\frac{1}{8}=0 \]
and
\[ w_{12}+w_{21}-w_{11} \leq \frac{5}{8}+\frac{3}{4}-\frac{7}{8}=\frac{1}{2}. \]
Now the probability that $w_{11}$, $w_{12}$, $w_{21}$, and $w_{22}$ all take values in the interval necessary for event $E_{\eps}$ to occur is $\eps/{8^3}$.
\end{proof}

\begin{figure}[th]
\begin{center}
\begin{tikzpicture}[
all nodes/.style={minimum size=0cm,inner sep =0.1mm},
vertex/.style={fill=black,circle,inner sep =0.5mm},
every label/.style={black,label distance=1mm},
myedge/.style={thick}]
	 \node [vertex] (u1) [label = {above:$u_1$}] {};
	 \node [vertex] (v1) [below left= of u1,label = {left:$v_1$}] {};
	 \node [vertex] (u2) [below left= of v1,label = {left:$u_2$}] {};
	 \node [vertex] (v2) [below left= of u2,label = {left:$v_2$}] {};
	 \node [vertex] (u1p) [below left= of v2,label = {left:$u_1$}] {};
	 \node [vertex] (v1p) [below left= of u1p,label = {left:$v_1$}] {};
	 \node [vertex] (v2p) [below right= of u1,label = {right:$v_2$}] {};
	 \node [vertex] (u2p) [below right= of v2p,label = {right:$u_2$}] {};
	 \node [vertex] (v1pp) [below right= of u2p,label = {right:$v_1$}] {};
	 \node [vertex] (u1pp) [below right= of v1pp,label = {right:$u_1$}] {};
	 \node [vertex] (v2pp) [below right= of u1pp,label = {right:$v_2$}] {};
	 \draw
	 (u1) edge[myedge] node[left] {$w_{11}$} (v1)
	 (v1) edge[myedge] node[left] {$w_{21}$}(u2)
	 (u2) edge[myedge] node[left] {$w_{22}$} (v2)
	 (v2) edge[dotted,myedge] (u1p)
	 (u1p) edge[myedge] node[left] {$w_{11}$} (v1p)
	 (u1) edge[myedge] node[right] {$w_{12}$}(v2p)
	 (v2p) edge[myedge] node[right] {$w_{22}$} (u2p)
	 (u2p) edge[myedge] node[right] {$w_{21}$} (v1pp)
	 (v1pp) edge[dotted,myedge] (u1pp)
	 (u1pp) edge[myedge] node[right] {$w_{12}$} (v2pp);
	 
\end{tikzpicture}
\caption{The computation tree $T^{4k}(u_1)$.} 
\label{fig:CompTree}
\end{center}
\end{figure}

\begin{lemma}
If event $E_{\eps}$ occurs, then the belief of node $u_1$ of $K_{2,2}$ at the end of the $4k$-th iteration is incorrect for all integers $k \leq \frac{1}{8\eps}-1$. \label{FalseBelief}
\end{lemma}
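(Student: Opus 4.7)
The plan is to use the standard computation-tree interpretation of max-product BP: by the analysis of Bayati et al.~\cite{BipartiteMatching}, the belief satisfies $b^{4k}_{u_1}(r)=t^{4k}(u_1;r)+C$ for a constant $C$ independent of $r$, so it suffices to compare $t^{4k}(u_1;1)$ with $t^{4k}(u_1;2)$. Under $E_\eps$ the unique MWM is $M_2=\{e_{12},e_{21}\}$, so $u_1$ should be matched to $v_2$; the belief is therefore incorrect whenever $t^{4k}(u_1;1)>t^{4k}(u_1;2)$.

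Since every vertex of $K_{2,2}$ has degree two, every non-root internal node of $T^{4k}(u_1)$ has exactly one child, so $T^{4k}(u_1)$ consists of the root $u_1$ together with two disjoint descending paths of length $4k+1$; reading off Figure~\ref{fig:CompTree}, the edge-weight sequence is $w_{11},w_{21},w_{22},w_{12},w_{11},\ldots$ along the left branch and $w_{12},w_{22},w_{21},w_{11},w_{12},\ldots$ along the right, each periodic with period four. On any such path, the requirement that every internal vertex be saturated together with the matching constraint forces exactly one of $\{e_i,e_{i+1}\}$ to lie in the matching for every internal index $i$, so the indicator sequence must alternate and there are only two feasible patterns: the odd-indexed edges or the even-indexed edges. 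At the root, exactly one of the two root edges is chosen, and that choice determines the parity pattern on both branches simultaneously.

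Choosing the left root edge and tallying how often each of $w_{11},w_{12},w_{21},w_{22}$ appears in the resulting selection (using the period-four structure of both weight sequences) yields
\[
t^{4k}(u_1;1)=(2k+1)\,w_{11}+2k\,w_{22}, \qquad t^{4k}(u_1;2)=(2k+1)\,w_{12}+2k\,w_{21},
\]
so
\[
t^{4k}(u_1;1)-t^{4k}(u_1;2)=(w_{11}-w_{12})-2k\bigl((w_{21}-w_{22})-(w_{11}-w_{12})\bigr).
\]
Event $E_\eps$ gives $w_{11}-w_{12}\ge 1/4$, while the lower endpoint of the interval in which $w_{22}$ lies rearranges to $(w_{21}-w_{22})-(w_{11}-w_{12})\le\eps$. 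Hence the difference is at least $1/4-2k\eps$, which is strictly positive (in fact at least $2\eps$) for every integer $k\le \frac{1}{8\eps}-1$, so $\argmax_r b^{4k}_{u_1}(r)=1\ne 2$ and the belief of $u_1$ is incorrect.

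The main obstacle is the bookkeeping in the third paragraph: one has to pick the correct parity pattern on each of the two branches for each root choice, line up the positions modulo four with the period-four weight sequence, and count occurrences so that the crucial partial cancellation $(2k+1)(w_{11}-w_{12})-2k(w_{11}-w_{12})=w_{11}-w_{12}$ emerges cleanly. Once this is done, the interval chosen for $w_{22}$ in $E_\eps$ is precisely tuned so that the residual $-2k\eps$ term only overcomes the $1/4$ gap after $k\gtrsim 1/(8\eps)$ iterations, matching the stated range of $k$.
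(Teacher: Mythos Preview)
Your argument is correct and follows essentially the same route as the paper: both reduce to comparing $t^{4k}(u_1;1)$ with $t^{4k}(u_1;2)$ via the computation-tree interpretation, observe that each branch of $T^{4k}(u_1)$ is a path on which the $T$-matching is forced once the root edge is fixed, and obtain the same count $(2k+1)w_{11}+2kw_{22}$ versus $(2k+1)w_{12}+2kw_{21}$, hence the bound $1/4-2k\eps>0$ for $k\le \tfrac{1}{8\eps}-1$. The only cosmetic difference is that the paper quotes $b^{4k}_{u_1}(r)=2\,t^{4k}(u_1;r)$ rather than $t^{4k}(u_1;r)+C$, but this is irrelevant for the $\argmax$.
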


\begin{proof}
Consider the computation tree $T^{4k}(u_1)$ (see Figure~\ref{fig:CompTree}). According to Bayati~et~al.~\cite[Lemma 1]{BipartiteMatching}, the belief of node~$u_1$ of $K_{2,2}$ after~$4k$ iterations is given by the two-dimensional vector $b^{4k}_{u_1}=\bigl[2t^{4k}(u_1;1)\ \ 2t^{4k}(u_1;2)\bigr]^t$. This means that, after~$4k$ iterations, the belief of node~$u_1$ that it should be matched to~$v_1$ is equal to twice the weight of the maximum-weight $T$-matching of $T^{4k}(u_1)$ that selects edge $(u_1,v_1)$ at the root. Analogously, after~$4k$
iterations, the belief of node~$u_1$ that it should be matched to~$v_2$ is equal to twice the weight of the maximum-weight $T$-matching of $T^{4k}(u_1)$ that selects edge $(u_1,v_2)$ at the root. The maximum-weight $T$-matching $\hat{\Lambda}$ that matches the root node to its child labeled $v_2$, matches each node labeled~$u_1$ to a node labeled~$v_2$ and each node labeled~$u_2$ to a node labeled~$v_1$, since this is the only possible $T$-matching that matches the root node to its child labeled~$v_2$. Define $\Lambda^\star$ as the $T$-matching that matches each node labeled~$u_1$ to a node labeled~$v_1$ and each node labeled~$u_2$ to a node labeled~$v_2$. We show that $\Lambda^\star$ has larger weight than~$\hat{\Lambda}$, which implies that the belief at node~$u_1$ after~$4k$ iterations is incorrect. We have
\begin{align*}
w(\Lambda^\star)-w(\hat{\Lambda})&= (2k+1)w_{11}+2kw_{22}-(2k+1)w_{12}-2kw_{21} &\\
&= 2k(w_{11}+w_{22}-w_{12}-w_{21}) +w_{11}-w_{12} &\\
&\geq -2k\eps +1/4. &
\end{align*}
Now $-2k\eps +1/4$ is greater than zero if $k\leq\frac{1}{8\eps}-1$. 
\end{proof}

\begin{theorem}
The probability that BP for MWM needs at least $t$ iterations to converge for $K_{2,2}$ with edge weights drawn independently and uniformly from $[0,1]$ is at least $\frac{1}{ct}$ for some constant $c > 0$.
\label{thm:twobytwouniform}
\end{theorem}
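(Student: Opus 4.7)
The plan is to combine Lemmas~\ref{ProbOfE} and~\ref{FalseBelief} by tuning the slack parameter $\eps$ to the target iteration count $t$. I want $\eps$ small enough that Lemma~\ref{FalseBelief} still guarantees an incorrect belief at some iteration of index at least $t$, yet large enough that Lemma~\ref{ProbOfE} yields $\probab(E_\eps) = \Omega(1/t)$. Since the two lemmas scale linearly (the permitted iteration range $k \leq 1/(8\eps) - 1$ versus the probability $\eps/8^3$), the product ``iteration count times probability'' is essentially a constant, which is exactly what the theorem claims.

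Concretely, given $t$, I would set $k = \lceil t/4\rceil$, which is a positive integer, and then $\eps = 1/(8(k+1))$. This $\eps$ is of order $1/t$ and lies in $(0,1/16] \subseteq (0,1/8]$, so Lemma~\ref{ProbOfE} gives $\probab(E_\eps) = \eps/8^3 = \Omega(1/t)$. Moreover $k \leq 1/(8\eps)-1$ by construction, so Lemma~\ref{FalseBelief} applies for this $k$.

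Now I would argue that $E_\eps$ implies $\tau \geq t$. Whenever $E_\eps$ occurs, the unique maximum-weight matching on $K_{2,2}$ is $M_2=\{e_{12},e_{21}\}$, which matches $u_1$ to $v_2$. By Lemma~\ref{FalseBelief}, after $4k$ iterations the belief of $u_1$ is incorrect: it is larger for $v_1$ than for $v_2$, so $\tilde M^{4k}$ matches $u_1$ to $v_1 \neq v_2 = M^\star(u_1)$. Hence BP has not yet converged at iteration $4k$, giving $\tau > 4k \geq t$. Combining this with the bound on $\probab(E_\eps)$ yields
\[
\probab(\tau \geq t) \;\geq\; \probab(E_\eps) \;=\; \frac{1}{8^3 \cdot 8(k+1)} \;\geq\; \frac{1}{ct}
\]
for a suitable absolute constant $c$.

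There is no real conceptual hurdle once the two preceding lemmas are in place; the only care needed is matching the granularity ``$4k$'' of the computation-tree analysis (which only controls iterations divisible by~$4$) with an arbitrary threshold $t$. Rounding $t/4$ up to an integer $k$ costs only a constant factor in $\eps$, and hence only a constant factor in the final bound, which is absorbed into $c$.
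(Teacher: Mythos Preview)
Your proof is correct and follows essentially the same approach as the paper: set $k=\lceil t/4\rceil$, choose $\eps=1/(8(k+1))$, and chain Lemma~\ref{FalseBelief} (to get $\tau>4k\ge t$ on $E_\eps$) with Lemma~\ref{ProbOfE} (to get $\probab(E_\eps)=1/(8^4(k+1))$). Your write-up is in fact slightly more explicit than the paper's in spelling out why $E_\eps$ forces non-convergence at iteration $4k$, but the argument is the same.
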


\begin{proof}
We denote the number of iterations necessary for convergence of BP for MWM by $\tau$. Using Lemma~\ref{ProbOfE} and Lemma~\ref{FalseBelief}, we have
\begin{align*}
\probab(\tau \geq t) &\geq \probab(\tau \geq 4\lceil t/4 \rceil) \geq \probab\left(E_{\frac{1}{8(\lceil t/4 \rceil+1)}}\right) \\
&= \frac{1}{8^4(\lceil t/4 \rceil+1)}\geq \frac{1}{ct}
\end{align*}
for some constant $c > 0$.
\end{proof}

\begin{corollary}
\label{cor:averagelower}
There exist bipartite graphs on $n \geq 4$ nodes, where~$n$ is a multiple of~$4$, with edge weights drawn independently and uniformly from $[0,1]$, for which the probability that BP for MWM needs at least $t$ iterations to converge is $\Omega\bigl(\frac{n}{t}\bigr)$ for $t \geq n/c'$
for some constant $c'>0$.
\label{ConvComplete}
\end{corollary}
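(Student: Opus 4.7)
The plan is to take $G$ to be the bipartite graph on $n$ nodes given by the disjoint union of $k = n/4$ copies of $K_{2,2}$, with edge weights drawn independently and uniformly from $[0,1]$. The first step is to observe that BP for MWM on $G$ decomposes into $k$ independent instances of BP on $K_{2,2}$: inspecting the message-update rules in Section~\ref{sec:bpdef}, every message $\vec{m}^t_{ij}$ (and analogously $\cev{m}^t_{ji}$) depends only on $w_{ij}$ and on messages received by $u_i$ from its neighbors in $G$, so no message ever travels between different copies. Consequently, the total number of iterations until convergence on $G$ satisfies $\tau(G) = \max_{i=1}^{k} \tau_i$, where the $\tau_i$ are i.i.d.\ copies of the random variable bounded in Theorem~\ref{thm:twobytwouniform}.

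The second step is a routine tail-bound calculation exploiting the independence of the $\tau_i$. Writing $p = \probab(\tau_i \geq t)$,
\[
\probab(\tau(G) \geq t) \;=\; 1 - (1-p)^k \;\geq\; 1 - e^{-kp} \;\geq\; (1 - 1/e)\,kp
\]
whenever $kp \leq 1$, where the last step uses the elementary inequality $1 - e^{-x} \geq (1-1/e)\,x$ on $[0,1]$. Theorem~\ref{thm:twobytwouniform} supplies $p \geq 1/(ct)$, so $kp \geq n/(4ct)$, and the condition $kp \leq 1$ is equivalent to $t \geq n/(4c)$. Setting $c' = 4c$ thus yields $\probab(\tau(G) \geq t) = \Omega(n/t)$ for every $t \geq n/c'$, which is the claim.

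I expect no real obstacle: the only substantive ingredient beyond Theorem~\ref{thm:twobytwouniform} is the decomposition of BP across disjoint components, which follows directly from the locality of the message updates and the fact that matching decomposes over connected components. The remainder is a short manipulation of $(1-p)^k$ via $(1-p)^k \leq e^{-kp}$, with the constraint $kp \leq 1$ determining the admissible range of $t$ and hence the value of $c'$.
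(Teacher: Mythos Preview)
Your approach is the same as the paper's: take the disjoint union of $n/4$ copies of $K_{2,2}$, note that BP decomposes across connected components so that $\tau(G)=\max_i \tau_i$ with the $\tau_i$ i.i.d., and combine independence with Theorem~\ref{thm:twobytwouniform} via $(1-p)^k \leq e^{-kp}$.

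There is one small logical slip in the tail computation: the step $1-e^{-kp} \geq (1-1/e)\,kp$ requires $kp \leq 1$, but you only possess a \emph{lower} bound $p \geq 1/(ct)$, so the assertion ``$kp \leq 1$ is equivalent to $t \geq n/(4c)$'' is not justified (the actual $p$ could be much larger than $1/(ct)$). The cleanest repair is to substitute the lower bound before applying the concavity inequality: bound $(1-p)^k \leq (1-1/(ct))^{n/4} \leq \exp\bigl(-n/(4ct)\bigr)$ by monotonicity, and then use $1-e^{-x} \geq (1-1/e)\,x$ (or, as the paper does, $e^{-x} \leq 1-x/2$) with $x = n/(4ct)$, which is indeed at most~$1$ exactly when $t \geq n/(4c)$. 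With this reordering your argument is correct and matches the paper's proof essentially line for line.
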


\begin{proof}
The bipartite graph consists of $n/4$ copies of $K_{2,2}$ and there are no edges between nodes in different copies of $K_{2,2}$. If BP does not converge in less than~$t$ iterations for at least one of the~$n/4$ copies of $K_{2,2}$, then BP does not converge in less than~$t$ iterations. This holds since a run of BP on this bipartite graph corresponds to~$n/4$ parallel runs of BP on the~$n/4$ copies of $K_{2,2}$.
Using Theorem~\ref{thm:twobytwouniform}, we have that a constant $c>0$ exists such that
\begin{align*}
\probab(\tau < t) &= \left(1-\probab\bigl(\textrm{BP needs at least $t$ iterations for a particular copy of $K_{2,2}$}\bigr)\right)^{n/4} \\
&\leq \left(1-\frac{1}{ct}\right)^{n/4} 
\leq \exp\left(-\frac{n}{4ct}\right)
\leq 1 - \frac{n}{8ct},
\end{align*}
where the second inequality follows from $1-x \leq \exp(-x)$ and the last inequality follows from $\exp(-x) \leq 1-\frac{x}{2}$ for $x \in [0, 1]$ and from $\frac{n}{4ct} \leq 1$ which holds if $t \geq \frac{n}{4c}$.
\end{proof}

\subsection{Smoothed Analysis}
\label{ssec:smoothedlower}

In this section we consider complete bipartite graphs $K_{n,n}$ in the smoothed setting.
We denote by $X \sim U[a,b]$ that random variable $X$ is uniformly distributed on interval $[a,b]$.
In the following we assume that $\phi \geq 26$ and $n \geq 2$ and even. Similarly to the average case (Section~\ref{sec:uniformweights}),
we define the event $E^{\phi}_{\eps}$ for $K_{2,2}$ and for $0<\eps \leq 1 / \phi$ as the event that $w_{11} \in \bigl[1-\frac{1}{\phi},1\bigr]$, $w_{12} \in \bigl(\frac{23}{26},\frac{23}{26}+\frac{1}{\phi}\bigr]$, $w_{21} \in \bigl(\frac{23}{26},\frac{23}{26}+\frac{1}{\phi}\bigr]$, and $w_{22} \in [w_{12}+w_{21}-w_{11}-\eps,w_{12}+w_{21}-w_{11})$. Consider the two possible matchings $M_1=\{e_{11},e_{22}\}$ and $M_2=\{e_{12},e_{21}\}$. If event $E^{\phi}_{\eps}$ occurs,
then the weight of $M_2$ is greater than the weight of $M_1$ and the weight difference is at most $\eps$. In addition $w_{11}$ is greater than $w_{12}$ and the weight difference is at least $\frac{3}{26}-\frac{2}{\phi}$.

\begin{lemma}
There exist probability distributions on $[0,1]$ for the weights of the edges, whose densities are bounded by $\phi$, such that the probability of event $E^{\phi}_{\eps}$ is at least $\eps\phi/4$. \label{ProbOfEPhi}
\end{lemma}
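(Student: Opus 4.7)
The plan is to exhibit explicit density functions $f_{11}, f_{12}, f_{21}, f_{22}$ on $[0,1]$, each pointwise bounded by $\phi$, for which $E^{\phi}_{\eps}$ occurs with probability at least $\eps\phi/4$. Since the event confines each weight to a prescribed short interval, the natural choice is to place a uniform density of height $\phi$ (or slightly less) on the relevant interval.

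Concretely, I would take $f_{11}$ to be uniform on $[1-1/\phi,1]$ and $f_{12}, f_{21}$ to be uniform on $(23/26,\,23/26+1/\phi]$, all with density $\phi$; each of these intervals has length $1/\phi$, so the corresponding weights satisfy the event's constraints with probability $1$. The only nontrivial choice is $f_{22}$, because the required interval $[w_{12}+w_{21}-w_{11}-\eps,\, w_{12}+w_{21}-w_{11})$ slides with the other three weights. Ranging $w_{11}, w_{12}, w_{21}$ over their allowed intervals, the quantity $w_{12}+w_{21}-w_{11}$ lies in $[20/26,\,20/26+3/\phi]$, so (using $\eps\le 1/\phi$) the required $w_{22}$-interval is always contained in the fixed interval $[20/26-1/\phi,\,20/26+3/\phi]$. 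This enclosing interval has length $4/\phi$ and, thanks to the assumption $\phi\ge 26$, is contained in $[0,1]$; hence I can take $f_{22}$ to be uniform on it with density $\phi/4\le\phi$.

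With these choices, conditioned on the first three weights attaining their prescribed ranges, $w_{22}$ falls into its length-$\eps$ target interval with probability $\eps\cdot\phi/4$; combined with the unit probabilities for the other three weights this yields $\probab(E^{\phi}_{\eps})\ge \eps\phi/4$. I do not anticipate a serious obstacle here: the only things to verify are the containment of the four supports in $[0,1]$ and the density upper bound $\phi$ on $f_{22}$, both of which reduce to the hypotheses $\phi\ge 26$ and $\eps\le 1/\phi$. The slightly awkward constant $23/26$ in the definition of $E^{\phi}_{\eps}$ is chosen precisely so that the extremal values of $w_{12}+w_{21}-w_{11}$ keep $w_{22}$ well inside $[0,1]$, which is what makes this construction go through.
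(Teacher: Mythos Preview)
Your proposal is correct and follows essentially the same approach as the paper: the paper makes the identical choices of uniform densities on $[1-1/\phi,1]$, $(23/26,23/26+1/\phi]$ (twice), and $[20/26-1/\phi,20/26+3/\phi]$, verifies the same containment of the sliding $w_{22}$-interval in the fixed length-$4/\phi$ window, and arrives at the same probability $\eps\phi/4$. If anything, your write-up is slightly more careful in explicitly noting that $\phi/4\le\phi$ and that all four supports lie in $[0,1]$ under the hypothesis $\phi\ge 26$.
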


\begin{proof}
The intervals in which $w_{11}$, $w_{12}$, and $w_{21}$ have to assume
values in order for event $E^{\phi}_{\eps}$ to occur all have a length of $\frac{1}{\phi}$. We choose the corresponding probability distributions such that they have density $\phi$ on the corresponding interval and density 0 elsewhere. The interval in which $w_{22}$ has to assume a value in order for event $E^{\phi}_{\eps}$ to occur
has a length of $\eps$. It is contained completely in the interval $\bigl[\frac{20}{26}-\frac{1}{\phi},\frac{20}{26}+\frac{3}{\phi}\bigr]$, since
\[ w_{12}+w_{21}-w_{11}-\eps > \frac{23}{26}+\frac{23}{26}-1-\frac{1}{\phi}=\frac{20}{26}-\frac{1}{\phi} \]
and
\[ w_{12}+w_{21}-w_{11} \leq
 \left(\frac{23}{26}+\frac{1}{\phi}\right)+ \left(\frac{23}{26}+\frac{1}{\phi}\right)-\left(1-\frac{1}{\phi}\right)
 =\frac{20}{26}+\frac{3}{\phi}.
\]
We choose the probability distribution for $w_{22}$ such that it has density $\frac{\phi}{4}$ on the interval $\bigl[\frac{20}{26}-\frac{1}{\phi},\frac{20}{26}+\frac{3}{\phi}\bigr]$ and 0 elsewhere. Now the probability that $w_{11}$, $w_{12}$, and $w_{21}$ take values in the interval necessary for event $E^{\phi}_{\eps}$ to occur is 1. For $w_{22}$, this probability is $\eps\phi/4$.
This completes the proof of Lemma~\ref{ProbOfEPhi}.
\end{proof}

\begin{lemma}
If event $E^{\phi}_{\eps}$ occurs, then the belief of node $u_1$ at the end of the $4k$-th iteration is incorrect for all integers $k \leq \frac{1}{52\eps}-1$. \label{FalseBeliefSmoothed}
\end{lemma}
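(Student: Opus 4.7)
The plan is to mimic almost verbatim the proof of Lemma~\ref{FalseBelief}, replacing the average-case weight bounds with the smoothed versions implied by the event $E^{\phi}_{\eps}$. First I would invoke Bayati et al.~\cite[Lemma 1]{BipartiteMatching} to assert that the belief of $u_1$ after $4k$ iterations equals $\bigl[2t^{4k}(u_1;1),\, 2t^{4k}(u_1;2)\bigr]^t$, so that it suffices to exhibit a $T$-matching of $T^{4k}(u_1)$ rooted at edge $(u_1,v_1)$ that beats the (unique) $T$-matching rooted at edge $(u_1,v_2)$. The computation tree $T^{4k}(u_1)$ has the same structure (Figure~\ref{fig:CompTree}) as in the average-case argument, so I would take the two $T$-matchings $\Lambda^\star$ and $\hat\Lambda$ defined there verbatim: $\Lambda^\star$ matches every internal node labeled $u_i$ to a child labeled $v_i$, while $\hat\Lambda$ (forced by its root edge) matches every internal node labeled $u_i$ to a child labeled $v_{3-i}$.

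Next I would count edges of each label type in $\Lambda^\star$ and $\hat\Lambda$ exactly as in Lemma~\ref{FalseBelief} to obtain
\begin{align*}
w(\Lambda^\star)-w(\hat\Lambda) &= (2k+1)w_{11}+2k w_{22}-(2k+1)w_{12}-2k w_{21}\\
&= 2k\bigl(w_{11}+w_{22}-w_{12}-w_{21}\bigr)+(w_{11}-w_{12}).
\end{align*}
Now the two bounds I need come directly from the definition of $E^{\phi}_{\eps}$. The choice $w_{22}\in[w_{12}+w_{21}-w_{11}-\eps,\,w_{12}+w_{21}-w_{11})$ gives $w_{11}+w_{22}-w_{12}-w_{21}\geq -\eps$, and the choices $w_{11}\in\bigl[1-\tfrac{1}{\phi},1\bigr]$ and $w_{12}\in\bigl(\tfrac{23}{26},\tfrac{23}{26}+\tfrac{1}{\phi}\bigr]$ give
\[
w_{11}-w_{12}\geq \Bigl(1-\tfrac{1}{\phi}\Bigr)-\Bigl(\tfrac{23}{26}+\tfrac{1}{\phi}\Bigr)=\tfrac{3}{26}-\tfrac{2}{\phi}\geq \tfrac{1}{26},
\]
where the last inequality uses the standing assumption $\phi\ge 26$. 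Combining these two estimates yields $w(\Lambda^\star)-w(\hat\Lambda)\geq -2k\eps+\tfrac{1}{26}$, which is strictly positive whenever $k<\tfrac{1}{52\eps}$, and in particular for all integers $k\leq \tfrac{1}{52\eps}-1$. This means $\Lambda^\star$ strictly outweighs $\hat\Lambda$, so $2t^{4k}(u_1;1)>2t^{4k}(u_1;2)$ and the belief of $u_1$ still points to $v_1$ even though the optimal matching of $K_{2,2}$ uses $(u_1,v_2)$.

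There is essentially no conceptual obstacle beyond the average-case proof; the only subtlety is that in the smoothed setting the gap $w_{11}-w_{12}$ is no longer the constant $\tfrac{1}{4}$ but is instead forced to be at least $\tfrac{3}{26}-\tfrac{2}{\phi}$, which is why the event $E^{\phi}_{\eps}$ was set up with the particular intervals around $\tfrac{23}{26}$ and with the assumption $\phi\geq 26$: this guarantees a positive, $\phi$-independent lower bound ($\tfrac{1}{26}$) on $w_{11}-w_{12}$, at the cost of replacing the constant $8$ in the exponent by $52=2\cdot 26$. The rest is a mechanical adaptation of Lemma~\ref{FalseBelief}.
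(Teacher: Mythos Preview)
Your proposal is correct and takes exactly the approach the paper intends: the paper's own proof is a one-line reference to Lemma~\ref{FalseBelief}, and you have faithfully spelled out that argument with the smoothed weight bounds, obtaining $w(\Lambda^\star)-w(\hat\Lambda)\ge -2k\eps+\tfrac{1}{26}$ from $w_{11}-w_{12}\ge\tfrac{3}{26}-\tfrac{2}{\phi}\ge\tfrac{1}{26}$ under the standing assumption $\phi\ge 26$.
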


\begin{proof}
As in Lemma~\ref{FalseBelief}, a maximum-weight $T$-matching that selects the edge labeled $(u_1,v_1)$ at the root has greater weight than a maximum-weight $T$-matching that selects the edge labeled $(u_1,v_2)$ at the root for these values of~$k$.
\end{proof}

Analogously to the proof of Theorem~\ref{thm:twobytwouniform},
Lemmas~\ref{ProbOfEPhi} and~\ref{FalseBeliefSmoothed}
above immediately yield a lower bound of $\Omega(\phi/t)$ for the probability that
BP runs for at least $t$ iterations.

Our goal in the remainder of this section is to prove an $\Omega(n\phi/t)$ lower bound
for the complete bipartite graph.
Thus, let us consider the complete bipartite graph $K_{n,n} =(U \cup V,E)$ with $U=\{u_p^j \mid p \in \{1,2\},
j \in \{1, \ldots, n/2\}\}$
and $V=\{v_q^j \mid q \in \{1,2\}, j \in \{1,\ldots, n/2\}$.
Let $H^j$ denote the subgraph induced by $\{u_1^j, u_2^j, v_1^j, v_2^j\}$
for $j \in \{1, \ldots, n/2\}$.
The role of the subgraphs $H^j$ is the same as the role of the copies of $K_{2,2}$
in the proof of Corollary~\ref{cor:averagelower}.
Let $e_{pq}^j$ be the edge connecting $u_p^j$ and $v_q^j$ ($p,q\in \{1,2\}$, $j \in \{1, \ldots, n/2\}$).
The weight of this edge is $w^j_{pq}$. We draw edge weights according to the probability distributions
\begin{equation}
\begin{array}{ll@{\qquad}ll}
\displaystyle w^j_{11} & \displaystyle \sim U\left[1-\frac{1}{\phi},1\right],
& \displaystyle w^j_{12} &\displaystyle \sim U\left(\frac{23}{26}, \frac{23}{26}+\frac{1}{\phi}\right], \\[\bigskipamount]
\displaystyle w^j_{21} &\displaystyle \sim U\left(\frac{23}{26}, \frac{23}{26}+\frac{1}{\phi}\right],
& \displaystyle w^j_{22} &\displaystyle \sim U\left[\frac{20}{26}-\frac{1}{\phi},\frac{20}{26}+\frac{3}{\phi}\right], \\[\bigskipamount]
\displaystyle w_{ab} &
\multicolumn{3}{l}{\displaystyle \sim U\left[0,\frac{1}{\phi}\right] \text{ if $u_a\in H^j$ and $v_b\in H^k$ with $j \neq k$.}}
\end{array}
\label{smoothweights}
\end{equation}
We call the edges between nodes in the same induced subgraph $H^j$ \emph{heavy edges}. Edges between nodes in different subgraphs $H^j$ and $H^k$ we call \emph{light edges}.
By assumption, we have $\phi \geq 26$. Thus, the weight of any light edge is at most $1/26$, while every
heavy edge weighs at least $19/26$.

In contrast to the proof of Corollary~\ref{cor:averagelower}, we now have to make sure that
light edges are not used in any computation tree.
This allows us to prove the lower bound in a similar way as Theorem~\ref{thm:twobytwouniform}
and Corollary~\ref{cor:averagelower}.

\begin{lemma}
Let $\Lambda^\star$ be the maximum-weight $T$-matching on the computation tree $T^k(u_i)$.
Then $\Lambda^\star$ does not contain any light edges.
\label{NoLightEdges} 
\end{lemma}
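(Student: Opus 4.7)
My plan is to argue by contradiction via a local exchange. Assume that $\Lambda^\star$ contains a light edge $(x,y)$, with $x$ the parent of $y$ in $T^k(u_i)$; write $x$'s label as $u_a^j$ and $y$'s as $v_b^\ell$ with $\ell\neq j$, so $w(x,y)\le 1/\phi \le 1/26$. Because $G=K_{n,n}$ is complete bipartite, both $v_1^j$ and $v_2^j$ are neighbors of $u_a^j$ in $G$, and at most one of them coincides with the label of $x$'s parent in the tree; hence $x$ has at least one child $z$ with label in copy $j$, and this heavy edge satisfies $w(x,z)\ge 19/26$.

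For every node $v$ in the tree I introduce $N(v)$ and $N'(v)$, the maximum weight of a $T$-matching in the subtree $T_v$ when $v$ is constrained to be matched externally (to its parent) or internally (to one of its children), respectively, and set $\delta(v):=N'(v)-N(v)$. I then construct an alternative $T$-matching $\Lambda'$ from $\Lambda^\star$ by removing $(x,y)$, inserting $(x,z)$, replacing the restriction of $\Lambda^\star$ to $T_y$ by the optimal $T$-matching in which $y$ is matched internally, and replacing the restriction to $T_z$ by the optimal $T$-matching in which $z$ is matched externally (i.e.\ reserved for the new edge to $x$). A direct bookkeeping shows that $\Lambda'$ is a valid $T$-matching and that
\[
w(\Lambda') - w(\Lambda^\star) \;=\; \bigl[w(x,z)-w(x,y)\bigr] + \delta(y) - \delta(z) \;\ge\; \tfrac{18}{26} + \delta(y) - \delta(z).
\]
To contradict the optimality of $\Lambda^\star$ it therefore suffices to prove the inequality $\delta(z) - \delta(y) < \tfrac{18}{26}$.

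The hard part will be establishing this bound on $\delta$. The key structural observation is that $z$ is a \emph{Case~I} node---its parent $x$ lies in the same copy $j$ as $z$---while $y$ is a \emph{Case~II} node---its parent $x$ lies in a different copy than $y$. A Case~I node has exactly one heavy neighbor among its children in the computation tree, whereas a Case~II node has two; moreover, heavy children of any node are always Case~I and light children are always Case~II, so the recursion for $\delta$ decomposes cleanly along Case type. I would prove by induction on the depth of the subtree that the strictly richer option set of a Case~II node forces $\delta(y) \ge \delta(z)$ up to an additive error of $O(1/\phi)$, which arises only from the narrow widths of the heavy-weight intervals specified in the smoothed weight distributions. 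Since $O(1/\phi) \ll \tfrac{18}{26}$ whenever $\phi \ge 26$, this yields the required gap. The delicate step is a term-by-term comparison of the two argmaxes at each level of the recursion: one must verify that the extra heavy option available to a Case~II node always compensates for any small adverse contribution from the single light option present in Case~I but absent in Case~II.
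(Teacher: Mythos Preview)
Your local-exchange setup is clean and the bookkeeping identity
\[
w(\Lambda') - w(\Lambda^\star) = \bigl[w(x,z)-w(x,y)\bigr] + \delta(y) - \delta(z)
\]
is correct. The gap is the step you yourself flag as ``delicate'': the bound $\delta(z)-\delta(y) \le O(1/\phi)$. Your stated reason—that the error comes ``only from the narrow widths of the heavy-weight intervals''—rests on a false premise. Each of the four heavy intervals has width $O(1/\phi)$, but they are centered at \emph{different} values ($1$, $23/26$, $23/26$, $20/26$), so two heavy weights can differ by roughly $6/26$, not $O(1/\phi)$. The recursion $\delta(v)=\max_c[w(v,c)-\delta(c)]$ therefore depends on \emph{which} heavy weight $w_{pq}$ sits on the unique heavy edge out of a Case~I node, and that in turn depends on the full label pair $(v,\text{parent})$, not just on the Case~I/Case~II dichotomy. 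A term-by-term induction would have to track these subtypes through the alternating heavy path $w_{21},w_{22},w_{12},w_{11},\ldots$ in each copy, and you have not done this. Without that analysis the inequality $\delta(z)-\delta(y)<18/26$ is unproven, and the proposal does not close.

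The paper avoids the $\delta$-recursion entirely. Instead of a one-step exchange at $x$, it builds an alternating leaf-to-leaf path $P$ through $(x,y)$: from each endpoint it alternately follows the optimal-matching edge inside the current copy $H^k$ (a heavy edge, not in $\Lambda^\star$) and the unique $\Lambda^\star$-edge. Swapping $P\cap\Lambda^\star$ for $P\setminus\Lambda^\star$ gives a direct weight count: every light edge on $P$ lies in $\Lambda^\star$ and is immediately followed by a heavy non-$\Lambda^\star$ edge, gaining at least $20/26 - 2/\phi$; the remaining all-heavy stretches decompose into length-$4$ pieces (which exactly compare the two matchings of some $H^k$, hence nonnegative gain) and at most $|I|+1$ length-$2$ pieces, each losing at most $6/26+1/\phi$. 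Summing with $|I|\ge 1$ and $\phi\ge 26$ gives a strictly positive gain. This path argument is global rather than local, but it sidesteps precisely the label-dependent recursion that your approach would still need to control.
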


\begin{proof}
Assume to the contrary that $\Lambda^\star$ contains a light edge $(x,y)$. In that case $x$ and~$y$ are in different subgraphs. The idea of the proof is to construct a path~$P$ from one leaf of the computation tree to another leaf that includes edge $(x, y)$. Path~$P$ alternately consists of edges that are in $\Lambda^\star$ and edges that are not. We show that a new $T$-matching of greater weight can be constructed by removing from $\Lambda^\star$ the edges in $P \cap \Lambda^\star$ and adding the edges in $P \setminus \Lambda^\star$.

We include the edge labeled $(x,y)$ in~$P$ and extend~$P$ on both sides: We start with node $z_0 = x$ and node $z_0 = y$, respectively, and construct the corresponding part of~$P$ as follows:

\newcommand{\QUADSIZE}{3ex}
\newcommand{\QUAD}{\hspace{\QUADSIZE}}
\begin{enumerate}
  \item \textbf{for} $i = 1, 3, 5, \ldots$ \textbf{do}
  \item \QUAD \textbf{if} $z_{i-1}$ is a leaf node \textbf{then} terminate.
  \item \QUAD Let~$H^k$ be the subgraph that~$z_{i-1}$ belongs to.
  \item \QUAD \begin{minipage}[t]{\textwidth-\leftmargin-\QUADSIZE}%
        Let $e_i = (z_{i-1}, z_i)$ be the edge incident to~$z_{i-1}$ that belongs to the optimal matching with respect to~$H^k$.
        \end{minipage}\label{line:stepI}
  \item \QUAD Add~$e_i$ to~$P$.
  \item \QUAD \textbf{if} $z_i$ is a leaf node \textbf{then} terminate.
  \item \QUAD \begin{minipage}[t]{\textwidth-\leftmargin-\QUADSIZE}%
        Let $e_{i+1} = (z_i, z_{i+1})$ be the (unique) edge incident to~$z_i$ that belongs to~$\Lambda^\star$.
        \end{minipage}\label{line:stepII}
  \item \QUAD Add $e_{i+1}$ to~$P$.
\end{enumerate}
It is clear that the procedure can only terminate if it finds a leaf. Moreover, the constructed sequence is alternating. Now we can show that no node will be visited twice: Otherwise there was an index~$i$ such that $z_{i-1} = z_{i+1}$ since we are moving in a tree. This can not happen, however, as the sequence is alternating. Hence, the procedure terminates. With the previous properties we also obtain that both paths constructed starting with $z_0 = x$ and $z_0 = y$, respectively, are disjoint since $z_1 \notin \{x, y\}$ in both cases. Consequently, we obtain one simple path~$P$ connecting two distinct leaf nodes and containing edge $(x, y)$.

We now show that the weight of the edges in $P \setminus \Lambda^\star$ is strictly larger than the weight of the edges in $P \cap \Lambda^\star$. For this, let~$P$ be of the form $P = (p_0, \ldots, p_\ell)$, where~$\ell$ is even and where $(p_0, p_1) \in \Lambda^\star$. Let $I \subseteq \{ 1, \ldots, \ell \}$ be the set of indices~$i$ for which $(p_{i-1}, p_i)$ is a light edge. Clearly, $(p_{i-1}, p_i) \in \Lambda^\star$ for each $i \in I$ by construction (see Line~\ref{line:stepI}). Since the light edge $(x, y)$ belongs to~$P$ we have $I \neq \emptyset$. For $i \in I$ let $P_i = (p_{i-1}, p_i, p_{i+1})$ be the subpath of~$P$ of length~$2$ starting at node~$p_{i-1}$. As $(p_i, p_{i+1})$ is a heavy edge, $w_{p_i p_{i+1}} - w_{p_{i-1} p_i} \geq \big( \frac{20}{26} - \frac{1}{\phi} \big) - \frac{1}{\phi} = \frac{20}{26} - \frac{2}{\phi}$. Hence, the difference in weight between the edge of~$P_i$ that belongs to $\Lambda^\star$ and the other one is significant.

Now remove all paths~$P_i$ from~$P$ and consider the subpaths of~$P$ (connected components) that remain. There are at most $|I|+1$ such subpaths~$P'$, each has even length, and they only consist of heavy edges, i.e., all their edges lie in one subgraph~$H^k$ where~$k$ depends on~$P'$. Consider such a subpath~$P'$ and partition it into subpaths~$\tilde{P}_j$ of length~$4$ and, if the length of~$P'$ is no multiple of~$4$, into one subpath~$\hat{P}$ of length~$2$. The $\Lambda^\star$-edges of~$\tilde{P}_j$ form the non-optimal matching on~$H^k$, whereas the other two edges form the optimal matching on~$H^k$. Hence, the total weight of $\tilde{P}_j \cap \Lambda^\star$ is at most the total weight of $\tilde{P}_j \setminus \Lambda^\star$. Only for~$\hat{P}$ we might have the case that the weight of $\hat{P} \cap \Lambda^\star$ is larger than the weight of $\hat{P} \setminus \Lambda^\star$, but since both edges are heavy, the difference is at most $1 - \big( \frac{20}{26} - \frac{1}{\phi} \big) = \frac{6}{26} + \frac{1}{\phi}$. Hence, the difference between the total weight of $P \setminus \Lambda^\star$ and the total weight of $P \cap \Lambda^\star$ is at least
\[
  |I| \cdot \left( \frac{20}{26} - \frac{2}{\phi} \right) - (|I|+1) \cdot \left( \frac{6}{26} + \frac{1}{\phi} \right)
  = |I| \cdot \left( \frac{14}{26} - \frac{3}{\phi} \right) - \left( \frac{6}{26} + \frac{1}{\phi} \right)
  \geq \frac{4}{26}
  > 0
\]
since $|I| \geq 1$ and $\phi \geq 26$.

We can now construct a $T$-matching with heavier weight than $\Lambda^\star$ by removing the edges in $P \cap \Lambda^\star$ from $\Lambda^\star$ and adding the edges in $P \setminus \Lambda^\star$. This contradicts the assumption that the maximum weight $T$-matching includes a light edge and proves the lemma.
\end{proof}

\begin{theorem}
\label{thm:smoothedlower}
There exist probability distributions on $[0,1]$ for the weights of the edges, whose densities are bounded by $\phi$, such that the probability that BP for MWM needs at least $t$ iterations to converge for $K_{n,n}$ is $\Omega(n\phi/t)$ for $t \geq n\phi/c$ for some constant $c>0$.
\end{theorem}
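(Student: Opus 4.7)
The plan is to adapt the argument from Corollary~\ref{cor:averagelower} to the smoothed setting on $K_{n,n}$, using Lemma~\ref{NoLightEdges} as the bridge that lets us treat the $n/2$ heavy subgraphs $H^{j}$ as if they were independent copies of $K_{2,2}$ even though they are now embedded in a complete bipartite graph with additional (light) edges. Concretely, the edge weights are drawn according to the distributions in~\eqref{smoothweights}, so in particular the collection of weights $\{w^{j}_{pq}\}_{p,q\in\{1,2\}}$ inside a single subgraph $H^{j}$ is independent of the weights inside any other $H^{k}$ and of the light edges.

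First I would set $\eps = 1 / \bigl(52(\lceil t/4\rceil + 1)\bigr)$ so that $\eps = \Theta(1/t)$, and focus on the single subgraph $H^{j}$. By Lemma~\ref{ProbOfEPhi} the event $E^{\phi}_{\eps}$, restricted to the four heavy edges of $H^{j}$, has probability at least $\eps\phi/4 = \Omega(\phi/t)$. The key observation is that, when computing the belief $b^{4k}_{u_{1}^{j}}$ on $K_{n,n}$, Lemma~\ref{NoLightEdges} guarantees that a maximum-weight $T$-matching on the computation tree $T^{4k}(u_{1}^{j})$ uses no light edges. Consequently, the maximum-weight $T$-matching that selects the root edge $(u_{1}^{j},v_{1}^{j})$ is obtained entirely from heavy edges of $H^{j}$, and likewise for the root edge $(u_{1}^{j},v_{2}^{j})$. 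The resulting restricted computation tree has exactly the structure of the tree analyzed in Lemma~\ref{FalseBeliefSmoothed} for the isolated $K_{2,2}$, so on the event $E^{\phi}_{\eps}$ the corresponding beliefs are incorrect at iteration $4k$ for every $k \leq 1/(52\eps) - 1$. For any belief pointing to a vertex in a different subgraph $H^{k}$, the required root edge is light and, again by Lemma~\ref{NoLightEdges}, its belief is strictly dominated by the belief pointing into $H^{j}$. Hence, conditional on $E^{\phi}_{\eps}$, the BP estimate $\tilde{M}^{t}$ does not match $u_{1}^{j}$ to its partner in the true optimum, so BP has not converged.

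Next I would combine these $n/2$ local events. Since the weights across the subgraphs $H^{1},\dots,H^{n/2}$ are mutually independent, the events $E^{\phi}_{\eps}(H^{j})$ are independent, and BP fails to converge by iteration $t$ as soon as at least one of them occurs. Following the estimate in Corollary~\ref{cor:averagelower}, this yields
\begin{align*}
\probab(\tau < t)
 & \leq \left(1 - \tfrac{c\phi}{t}\right)^{n/2}
   \leq \exp\!\left(-\tfrac{cn\phi}{2t}\right)
   \leq 1 - \tfrac{cn\phi}{4t},
\end{align*}
where the last step uses $e^{-x} \leq 1 - x/2$ for $x \in [0,1]$ and therefore requires $cn\phi/(2t) \leq 1$, i.e.\ $t \geq n\phi/c'$ for a suitable constant $c' > 0$. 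This gives the stated $\Omega(n\phi/t)$ lower bound.

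The main technical obstacle is the one already dealt with in Lemma~\ref{NoLightEdges}: without it, the beliefs at $u_{1}^{j}$ would in principle depend on paths in the computation tree that leave $H^{j}$ through light edges, and the clean reduction to the isolated $K_{2,2}$ analysis would break. Given that lemma, however, everything that remains is a direct translation of the average-case argument, with the bound $\eps\phi/4$ from Lemma~\ref{ProbOfEPhi} replacing the bound $\eps/8^{3}$ from Lemma~\ref{ProbOfE} and the threshold $k \leq 1/(52\eps) - 1$ from Lemma~\ref{FalseBeliefSmoothed} replacing $k \leq 1/(8\eps) - 1$.
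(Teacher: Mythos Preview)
Your overall strategy—use Lemma~\ref{NoLightEdges} to decouple the subgraphs $H^{j}$, then reuse the $K_{2,2}$ analysis inside each, then combine via independence as in Corollary~\ref{cor:averagelower}—is the paper's plan as well. The gap is in the bridge step.

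Lemma~\ref{NoLightEdges} is stated and proved only for the \emph{unconstrained} maximum-weight $T$-matching $\Lambda^\star$ on $T^{4k}(u_1^{j})$. You invoke it for the two \emph{constrained} optima $t^{4k}(u_1^{j};v_1^{j})$ and $t^{4k}(u_1^{j};v_2^{j})$ and conclude that each is ``obtained entirely from heavy edges of $H^{j}$''. Neither inference is justified. The swap in the proof of Lemma~\ref{NoLightEdges} is along a leaf-to-leaf alternating path; when that path passes through the root it replaces the root edge, so the argument does not respect a fixed-root-edge constraint. And even granting no light edges, a $T$-matching on the $K_{n,n}$ computation tree must match every non-leaf node, including all the subtrees hanging off light children of $u_1^{j}$; it cannot consist solely of $H^{j}$-edges, so the restricted tree is not ``exactly the structure'' of Lemma~\ref{FalseBeliefSmoothed}.

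The paper sidesteps this by arguing about the unconstrained $\Lambda^\star$ directly. Lemma~\ref{NoLightEdges} forces the root edge of $\Lambda^\star$ to be heavy, hence $e^{j}_{11}$ or $e^{j}_{12}$. Assuming it is $e^{j}_{12}$, one builds (with the same procedure as in Lemma~\ref{NoLightEdges}) an alternating path through the root; because $\Lambda^\star$ is light-edge-free this path stays in $H^{j}$, has the exact $e^{j}_{11}/e^{j}_{12}/e^{j}_{21}/e^{j}_{22}$ pattern of the $K_{2,2}$ tree, and yields the same $w^{j}_{11}-w^{j}_{12}-2k\eps>0$ contradiction. Hence $\Lambda^\star$ uses $e^{j}_{11}$, i.e.\ the belief at $u_1^{j}$ is wrong. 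Your sketch can be repaired along these lines, but the step you pass over—why the comparison of the two beliefs on the big tree really does reduce to the $K_{2,2}$ computation—is precisely where the work is, and a blanket appeal to Lemmas~\ref{NoLightEdges} and~\ref{FalseBeliefSmoothed} does not cover it.
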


\begin{proof}
We choose the probability distributions for the edge weights according to~\eqref{smoothweights}. Let
$\eps=\frac{1}{52(k+1)}$ for $k = 4\lceil t/4\rceil$ and assume that event $E^{\phi}_{\eps}$ occurs for subgraph $H^j$. In this case the weight of matching $M_2 = \{ e^j_{12}, e^j_{21} \}$ is larger than matching $M_1 = \{ e^j_{11}, e^j_{22} \}$, but at most by the small amount of~$\eps$. Consider the computation tree $T^{4k}(u^j_1)$. As in the proof of Lemma~\ref{FalseBelief} we know that if the maximum weight $T$-matching $\Lambda^\star$ on $T^{4k}(u^j_1)$ does not include the edge labeled $e^j_{12}$ at the root, then BP has not yet converged within the first $4k \geq t$ iterations (see Bayati~et~al.~\cite[Lemma 1]{BipartiteMatching}).

%
%
We show that $\Lambda^\star$ does not include edge $e^j_{12}$. Assume to the contrary that it does. We know from Lemma~\ref{NoLightEdges} that~$\Lambda^\star$ does not contain light edges. Now we use the same procedure to create a path~$P$ from one leaf of $T^{4k}(u^j_1)$ to another leaf that contains edge $e^j_{12}$ and alternates between edges from~$\Lambda^\star$ and edges from $T^{4k}(u^j_1) \setminus \Lambda^\star$. Since $T^{4k}(u^j_1)$ has height $4k+1$ and since $u^j_1$ is the root of $T^{4k}(u^j_1)$, path~$P$ contains exactly $8k+2$ edges, $2k+1$ of which are edges $e^j_{12}$, $2k+1$ of which are edges $e^j_{11}$, $2k$ of which are edges $e^j_{22}$, and $2k$ of which are edges $e^j_{21}$. The edges $e^j_{12}$ and $e^j_{21}$ are exactly the edges of $P \cap \Lambda^\star$. As in Lemma~\ref{FalseBelief}, the difference of weight between edges from $P \setminus \Lambda^\star$ and $P \cap \Lambda^\star$ is at least
\begin{align*}
  w^j_{11} - w^j_{12} - 2k\eps
  &\geq \left( \left( 1 - \frac{1}{\phi} \right) - \left( \frac{23}{26} + \frac{1}{\phi} \right) \right) -\frac{2k}{52(k+1)} \cr
  &> \frac{3}{26} - \frac{2}{\phi} - \frac{1}{26}
  \geq 0
\end{align*}
since $\phi \geq 26$. This contradicts the fact that~$\Lambda^\star$ is optimal since removing from~$\Lambda^\star$ the edges in $P \cap \Lambda^\star$ and adding the edges in $P \setminus \Lambda^\star$ yields a $T$-matching of heavier weight for $T^{4k}(u^j_1)$.

We have shown that BP does not converge within the first~$t$ iterations if event $E^{\phi}_{\eps}$ occurs for some subgraph $H^j$. Since there are $n/2$ such subgraphs, we get that the probability that BP for MWM needs at least $t$ iterations to converge for $K_{n,n}$ is $\Omega\bigl(\frac{n\phi}{t}\bigr)$
since
\begin{align*}
  \probab(\tau \leq t)
  &\leq \probab \big( \text{$E^{\phi}_{\eps}$ does not occur for any subgraph $H^j$} \big) \cr
  &\leq \left( 1 - \frac{\eps \phi}{4} \right)^{n/2}
  \leq \exp \left( -\frac{\eps n \phi}{8} \right)
  = \exp \left( -\frac{n\phi}{8 \cdot 52 \cdot (4 \cdot \lceil t/4 \rceil+1)} \right) \cr
  &\leq 1 - \frac{n\phi}{2 \cdot 8 \cdot 52 \cdot (4 \cdot \lceil t/4 \rceil+1)}
\end{align*}
where the second inequality follows from Lemma~\ref{ProbOfEPhi}. The third inequality is due to the fact that $1-x \leq \exp(-x)$, whereas the last inequality stems from $\exp(-x) \leq 1 - \frac{x}{2}$ for $x \in [0, 1]$. If $x = \frac{n\phi}{8 \cdot 52 \cdot (4 \cdot \lceil t/4 \rceil + 1)}$ is at most~$1$, which holds for $t \geq \frac{n\phi}{8 \cdot 52}$, then the correctness follows.
\end{proof}

Note that the lower bound on the probability that $BP$ for $MWM$ converges within $t$ iterations only differs a factor $O(m)$ from the upper bound from Section~\ref{ssec:uppermatching}.

\subsection{Other Variants of Belief Propagation}
\label{sec:OtherBPVars}
The results of Sections~\ref{sec:uniformweights} and~\ref{ssec:smoothedlower} also hold for other versions of belief propagation for minimum/maximum-weight (perfect) $b$-matching and min-cost flow~\cite{BayatiArbitraryMatching,SanghaviArbitraryMatching,BeliefMCF} applied to the matching problem on bipartite graphs. The difference in the number of iterations until convergence differs no more than a constant factor. We omit the technical details but provide some comments on how the proofs need to be adjusted.   

Some of the versions of BP consider minimum-weight perfect matching~\cite{BayatiArbitraryMatching} or min-cost flow~\cite{BeliefMCF} instead of maximum-weight perfect matching. For these versions we get the same results if we have edge weights $\tilde{w}_{e}=1-w_{e}$ for all edges $e$.

For some of the versions of BP~\cite{SanghaviArbitraryMatching,BeliefMCF} the root of the computation tree is an edge instead of a node . If we choose the root of this tree suitably, then we have that the difference in weight between the two matchings $M_1$ and $M_2$ of at most $\eps$ not only has to `compensate' the weight difference $\Delta w(e_1,e_2)$ between an edge $e_1$ in $M_1$ and an edge $e_2$ in $M_2$, but the entire weight $w_e$ of an edge $e$ in $M_1$ or $M_2$. However, the probability distributions for the edge weights in Sections~\ref{sec:uniformweights} and~\ref{ssec:smoothedlower} are chosen such that $\Delta w(e_1,e_2)$ and $w_e$ do not differ more than a constant factor.


\begin{thebibliography}{10}

\bibitem{AhujaNetworkFlow}
Ravindra~K. Ahuja, Thomas~L. Magnanti, and James~B. Orlin.
\newblock {\em Network flows: theory, algorithms, and applications}.
\newblock Prentice-Hall, 1993.

\bibitem{BayatiArbitraryMatching}
Mohsen Bayati, Christian Borgs, Jennifer Chayes, and Riccardo Zecchina.
\newblock Belief-propagation for weighted b-matching on arbitrary graphs and
  its relation to linear programs with integer solutions.
\newblock {\em SIAM Journal on Discrete Mathematics}, 25(2):989--1011, 2011.

\bibitem{CavityMST}
Mohsen Bayati, Alfredo Braunstein, and Riccardo Zecchina.
\newblock A rigorous analysis of the cavity equations for the minimum spanning
  tree.
\newblock {\em Journal of Mathematical Physics}, 49(12):125206, 2008.

\bibitem{BipartiteMatching}
Mohsen Bayati, Devavrat Shah, and Mayank Sharma.
\newblock Max-product for maximum weight matching: Convergence, correctness,
  and {LP} duality.
\newblock {\em IEEE Transactions on Information Theory}, 54(3):1241--1251,
  2008.

\bibitem{WinnersLosers}
Ren{\'e} Beier and Berthold V{\"o}cking.
\newblock Typical properties of winners and losers in discrete optimization.
\newblock {\em SIAM Journal in Computing}, 35(4):855--881, 2006.

\bibitem{BeliefMCF}
David Gamarnik, Devavrat Shah, and Yehua Wei.
\newblock Belief propagation for min-cost network flow: Convergence {\&}
  correctness.
\newblock {\em Operations Research}, 60(2):410--428, 2012.

\bibitem{it}
Bodo Manthey and Heiko R{\"o}glin.
\newblock Smoothed analysis: Analysis of algorithms beyond worst case.
\newblock {\em it -- Information Technology}, 53(6):280--286, 2011.

\bibitem{Pearl:PlausibleInference:1988}
Judea Pearl.
\newblock {\em Probabilistic Reasoning in Intelligent Systems: Networks of
  Plausible Inference}.
\newblock Morgan Kaufmann, 1988.

\bibitem{IntegerProgramming}
Heiko R{\"o}glin and Berthold V{\"o}cking.
\newblock Smoothed analysis of integer programming.
\newblock {\em Mathematical Programming}, 110(1):21--56, 2007.

\bibitem{AverageMatchingBP}
Justin Salez and Devavrat Shah.
\newblock Optimality of belief propagation for random assignment problem.
\newblock {\em Mathematics of Operations Research}, 34(2):468--480, 2009.

\bibitem{SanghaviArbitraryMatching}
Sujay Sanghavi, Dmitry~M. Malioutov, and Alan~S. Willsky.
\newblock Belief propagation and {LP} relaxation for weighted matching in
  general graphs.
\newblock {\em IEEE Transactions on Information Theory}, 57(4):2203--2212,
  2011.

\bibitem{BPIndSet}
Sujay Sanghavi, Devavrat Shah, and Alan~S. Willsky.
\newblock Message passing for maximum weight independent set.
\newblock {\em IEEE Transactions on Information Theory}, 55(11):4822 --4834,
  2009.

\bibitem{SpielmanTeng}
Daniel~A. Spielman and Shang-Hua Teng.
\newblock Smoothed analysis of algorithms: Why the simplex algorithm usually
  takes polynomial time.
\newblock {\em Journal of the ACM}, 51(3):385--463, 2004.

\bibitem{CACM}
Daniel~A. Spielman and Shang-Hua Teng.
\newblock Smoothed analysis: An attempt to explain the behavior of algorithms
  in practice.
\newblock {\em Communications of the ACM}, 52(10):76--84, 2009.

\bibitem{BPforStereo}
Marshall~F. Tappen and William~T. Freeman.
\newblock Comparison of graph cuts with belief propagation for stereo, using
  identical mrf parameters.
\newblock {\em IEEE International Conference on Computer Vision}, 2:900--907,
  2003.

\bibitem{Yanover02approximateinference}
Chen Yanover and Yair Weiss.
\newblock Approximate inference and protein folding.
\newblock In {\em Advances in Neural Information Processing Systems}, pages
  84--86. MIT Press, 2002.

\bibitem{UnderstandingBP}
Jonathan~S. Yedidia, William~T. Freeman, and Yair Weiss.
\newblock Understanding belief propagation and its generalizations.
\newblock In Gerhard Lakemeyer and Bernhard Nebel, editors, {\em Exploring
  Artificial Intelligence in the New Millennium}, chapter~8, pages 239--269.
  Morgan Kaufmann, 2003.

\end{thebibliography}

\end{document}